\setlist[enumerate]{label=(\alph*)}
\DeclareMathAlphabet{\mathpzc}{OT1}{pzc}{m}{it}
\DeclareMathOperator{\proj}{proj}
\DeclareMathOperator{\finished}{finished}
\DeclareMathOperator{\solver}{solver}
\DeclareMathOperator{\bisection}{Bisection}
\DeclareMathOperator{\newton}{Newton}
\DeclareMathOperator{\halley}{Halley}
\DeclareMathOperator{\newtonsqr}{NewtonSqr}
\DeclareMathOperator{\lo}{lo}
\DeclareMathOperator{\up}{up}
\DeclareMathOperator{\aux}{auxiliary}
\newcommand{\B}{\mathbb{B}}
\newcommand{\N}{\mathbb{N}}
\newcommand{\R}{\mathbb{R}}
\newcommand{\X}{\mathcal{X}}
\newcommand{\lag}{\mathscr{L}}
\newcommand{\shrink}{\mathcal{S}}
\newcommand{\scndmax}{\text{2nd-}\max}
\newcommand{\sy}{\mathrm{sum}_{\tilde{y}}}
\newcommand{\scpy}{\mathrm{scp}_{p,\,y}}
\newcommand{\intervalcc}[2]{\left[#1,\ #2\right]}
\newcommand{\intervaloo}[2]{\left(#1,\ #2\right)}
\newcommand{\intervalco}[2]{\left[#1,\ #2\right)}
\newcommand{\discint}[2]{\{#1,\dotsc,#2\}}
\newcommand{\inint}[2]{\in\discint{#1}{#2}}
\newcommand{\abs}[1]{\left\vert #1 \right\vert}
\newcommand{\norm}[1]{\left\Vert#1\right\Vert}
\newcommand{\bnorm}[1]{\big\Vert#1\big\Vert}
\newcommand{\transp}{^T}
\newcommand{\musteq}{\overset{!}{=}}
\renewcommand{\emptyset}{\varnothing}
\begin{document}
\begin{center}
{\bf\sffamily\huge Efficient Sparseness-Enforcing Projections}\bigskip\bigskip

{\sffamily\Large Markus Thom\footnote{driveU / Institute of Measurement, Control and Microtechnology, Ulm University, Ulm, Germany} and G\"{u}nther Palm\footnote{Institute of Neural Information Processing, Ulm University, Ulm, Germany\\\noindent E-mail addresses: {\tt markus.thom@uni-ulm.de}, {\tt guenther.palm@uni-ulm.de}}}
\end{center}

\begin{abstract}\noindent
{\bf Abstract.}
We propose a linear time and constant space algorithm for computing Euclidean projections onto sets on which a normalized sparseness measure attains a constant value.
These non-convex target sets can be characterized as intersections of a simplex and a hypersphere.
Some previous methods required the vector to be projected to be sorted, resulting in at least quasilinear time complexity and linear space complexity.
We improve on this by adaptation of a linear time algorithm for projecting onto simplexes.
In conclusion, we propose an efficient algorithm for computing the product of the gradient of the projection with an arbitrary vector.
\end{abstract}

\section{Introduction}
In a great variety of classical machine learning problems, sparse solutions are appealing because they provide more efficient representations compared to non-sparse solutions.
Several formal sparseness measures have been proposed in the past and their properties have been thoroughly analyzed~\cite{Hurley2009}.
One remarkable sparseness measure is the normalized ratio of the $L_1$ norm and the $L_2$ norm of a vector, as originally proposed by~\cite{Hoyer2004}:
\begin{displaymath}
  \sigma\colon\R^n\setminus\set{0}\to\intervalcc{0}{1}\text{,}\qquad x\mapsto\frac{\sqrt{n} - \frac{\norm{x}_1}{\norm{x}_2}}{\sqrt{n}-1}\text{.}
\end{displaymath}
Here, higher values of $\sigma$ indicate more sparse vectors.
The extreme values of $0$ and $1$ are achieved for vectors where all entries are equal and vectors where all but one entry vanish, respectively.
Further, $\sigma$ is scale-invariant, that is $\sigma(\alpha x) = \sigma(x)$ for all $\alpha\neq 0$ and all $x\in\R^n\setminus\set{0}$.

The incorporation of explicit sparseness constraints to existing optimization problems while still being able to efficiently compute solutions to them was made possible by~\cite{Hoyer2004} through proposition of an operator, which computes the Euclidean projection onto sets on which $\sigma$ attains a desired value.
In other words, given a target degree of sparseness $\sigma^*\in\intervaloo{0}{1}$ with respect to $\sigma$, numbers $\lambda_1,\lambda_2 > 0$ can be derived such that $\sigma\equiv\sigma^*$ on the non-convex set
\begin{displaymath}
  D := \Set{s\in\R_{\geq 0}^n | \norm{s}_1 = \lambda_1\text{ and }\norm{s}_2 = \lambda_2}\text{.}
\end{displaymath}
Clearly, either of $\lambda_1$ and $\lambda_2$ has to be fixed to a pre-defined value, for example by setting $\lambda_2 := 1$ for achieving normalized vectors, as only their ratio is important in the definition of $\sigma$.
By restricting possible solutions to certain optimization problems to lie in $D$, projected gradient descent methods~\cite{Bertsekas1999} can be used to achieve solutions that fulfill explicit sparseness constraints.

The projection operator of~\cite{Hoyer2004} was motivated by geometric ideas, in such that the intersection of hyperplanes, hyperspheres and the non-negative orthant were considered and a procedure of alternating projections was proposed.
This procedure is known to produce correct projections when applied to the intersection of convex sets~\cite{Deutsch2001}.
In the non-convex setup considered here, it is not clear in the first place whether the method also computes correct projections.
The results of~\cite{Theis2005}, however, show that alternating projections also work for the sparseness projection, and that the projection onto $D$ is unique almost everywhere.

It was further noted recently that the method of Lagrange multipliers can also be used to derive an implicit, compact representation of the sparseness projection~\cite{Potluru2013}.
The algorithm proposed there needs to sort the vector that is to be projected and remember the sorting permutation, resulting in a computational complexity that is quasilinear and a space complexity that is linear in the problem dimensionality $n$.
In this work, we provide a detailed derivation of the results of~\cite{Potluru2013}.
Then, by transferring the ideas of~\cite{Liu2009a} to efficiently compute projections onto simplexes we use the implicit representation to propose a linear time and constant space algorithm for computing projections onto $D$.
Ultimately, we propose an algorithm that efficiently computes the product of the gradient of the projection onto $D$ with an arbitrary vector.

\section{Notation and Prerequisites}
We denote the set of Boolean values with $\B$, the real numbers with $\R$ and the $n$-dimensional Euclidean space with $\R^n$.
Subscripts for elements from $\R^n$ denote individual coordinates.
All entries of the vector $e\in\R^n$ are unity.
$\R^{n\times n}$ is the ring of matrices with $n$ rows and $n$ columns, and $E_n\in\R^{n\times n}$ is the identity matrix.
It is well-known that the $L_1$ norm and the $L_2$ norm are equivalent in the topological sense~\cite{Laub2004}:
\begin{remark}
\label{rem:norms_equivalent}
For all $x\in\R^n$, we have that $\norm{x}_2 \leq \norm{x}_1 \leq \sqrt{n}\norm{x}_2$.
If $x$ is sparsely populated, then the latter inequality can be sharpened to $\norm{x}_1 \leq \sqrt{d}\norm{x}_2$, where $d := \norm{x}_0 \leq n$ denotes the number of non-vanishing entries in $x$.
\end{remark}
Therefore $\lambda_2 < \lambda_1 < \sqrt{n}\lambda_2$ must hold for the target norms to achieve a sparseness of $\sigma^*\in\intervaloo{0}{1}$.
The projection onto a set contains all points with infimal distance to the projected vector~\cite{Deutsch2001}:
\begin{definition}
\label{dfn:projection}
Let $x\in\R^n$ and $\emptyset\neq M\subseteq\R^n$.
Then every point in
\begin{displaymath}
  \proj_M(x) := \set{y\in M | \norm{y - x}_2 \leq \norm{z - x}_2\text{ for all }z\in M}
\end{displaymath}
is called \emph{Euclidean projection} of $x$ onto $M$.
If there is exactly one point $y$ in $\proj_M(x)$, then $y = \proj_M(x)$ is written for abbreviation.
\end{definition}
We further note that projections onto permutation-invariant sets are order-preserving:
\begin{proposition}
\label{prop:orderpres}
Let $\emptyset\neq M\subseteq\R^n$ such that $P_\tau x\in M$ for all $x\in M$ and all permutation matrices $P_\tau\in\R^{n\times n}$.
Let $x\in\R^n$ and $p\in\proj_M(x)$.
Then $x_i > x_j$ implies $p_i \geq p_j$ for all $i,j\inint{1}{n}$.
\end{proposition}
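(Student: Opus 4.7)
The plan is a swap argument by contradiction. Suppose there exist indices $i,j$ with $x_i > x_j$ but $p_i < p_j$. I will construct a competitor $p' \in M$ that is strictly closer to $x$ than $p$ is, contradicting $p \in \proj_M(x)$.

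Let $P_\tau$ be the transposition matrix that swaps coordinates $i$ and $j$ and leaves the others fixed. Set $p' := P_\tau p$; then $p' \in M$ by the permutation-invariance assumption. Only coordinates $i$ and $j$ differ between $p$ and $p'$, so
\begin{displaymath}
  \norm{p - x}_2^2 - \norm{p' - x}_2^2 = (p_i - x_i)^2 + (p_j - x_j)^2 - (p_j - x_i)^2 - (p_i - x_j)^2\text{.}
\end{displaymath}
Factoring the right-hand side via $(p_i - x_i)^2 - (p_j - x_i)^2 = (p_i - p_j)(p_i + p_j - 2x_i)$ and the analogous identity for the other pair, the sum telescopes to $-2(p_i - p_j)(x_i - x_j)$. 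Under the assumption $p_i < p_j$ and $x_i > x_j$, both factors have opposite sign with a leading minus, so the whole expression is strictly positive.

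Hence $\norm{p' - x}_2 < \norm{p - x}_2$ with $p' \in M$, contradicting the definition of $p$ as an element of $\proj_M(x)$. Therefore $p_i \geq p_j$ must hold, which is the claim. I do not anticipate a real obstacle here; the only thing to be careful about is that $p'$ indeed lies in $M$, which is exactly what the permutation-invariance hypothesis guarantees, and that the algebraic simplification of the squared-norm difference is handled cleanly so that the sign of the resulting product is transparent.
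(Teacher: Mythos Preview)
Your argument is correct: the transposition $P_\tau$ swapping $i$ and $j$ produces a competitor $p'\in M$ by permutation-invariance, and the computation
\[
\norm{p-x}_2^2 - \norm{p'-x}_2^2 = -2(p_i-p_j)(x_i-x_j) > 0
\]
under the hypothesis $x_i>x_j$, $p_i<p_j$ gives the desired contradiction. The algebra you indicate checks out.

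As for comparison with the paper: there is nothing to compare. The paper states Proposition~\ref{prop:orderpres} without proof, treating it as a standard fact about projections onto permutation-invariant sets. Your swap argument is exactly the expected justification and would serve well as the missing proof.
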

Throughout the paper, we assume that the input vector to the projection operator is chosen such that the projection onto $D$ is unique.
As has been shown by~\cite{Theis2005}, this is fulfilled by almost all $x\in\R^n\setminus\set{0}$ and is thus no restriction in practice.

\section{Implicit Representation of the Projection}
As noted by~\cite{Potluru2013}, the method of Lagrange multipliers can be used to derive an implicit representation of the projection onto $D$.
To make this paper as self-contained as possible, we include an elaborate derivation of their result.

\begin{lemma}
\label{lem:representation}
Let $x\in\R_{\geq 0}^n\setminus D$ such that $\proj_D(x)$ is unique.
Then there exist unique numbers $\alpha\in\R$ and $\beta\in\R_{>0}$ such that $\proj_D(x) = \max\big(\frac{1}{\beta}\left(x - \alpha\cdot e\right),\ 0\big)$.
\end{lemma}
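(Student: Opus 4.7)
The plan is to recast $p := \proj_D(x)$ as the minimizer of $\frac{1}{2}\norm{s - x}_2^2$ over the constraint set defined by the two equalities $\norm{s}_1 = \lambda_1$, $\norm{s}_2^2 = \lambda_2^2$ and the inequalities $s_i \geq 0$, and then read the claimed formula off the KKT conditions. Introducing multipliers $\alpha\in\R$ for the sum constraint, $\tfrac{1}{2}(\beta - 1)\in\R$ for the squared-norm constraint, and $\mu_i\geq 0$ for each nonnegativity constraint, stationarity in $s_i$ collapses to $\beta s_i = x_i - \alpha + \mu_i$; combined with complementary slackness $\mu_i s_i = 0$, this splits the coordinates into two cases: either $s_i > 0$ with $\mu_i = 0$ and $\beta s_i = x_i - \alpha$, or $s_i = 0$ with $\mu_i = \alpha - x_i \geq 0$. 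Once $\beta > 0$ is established, these two cases merge into $s_i = \max\bigl(\tfrac{1}{\beta}(x_i - \alpha),\ 0\bigr)$, which is the assertion.

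Three points still need attention. First, to apply the KKT conditions I would verify a constraint qualification at $p$. The gradients of the active constraints are $e$, $2p$, and the negative standard basis vectors at every index with $p_i = 0$, so linear independence reduces to showing that the restrictions of $e$ and $p$ to the positive support of $p$ are linearly independent. A failure would force $p$ to be constant on its positive support, which together with Remark~\ref{rem:norms_equivalent} confines the problem to the degenerate parameter regime $\lambda_1/\lambda_2 = \sqrt{k}$ where $k$ is the support size; in this regime, uniqueness of $\proj_D(x)$ combined with $x\notin D$ can be used to exclude $p$ having this special shape. I expect this constraint-qualification step to be the main obstacle, since it is the only place where the geometric hypotheses on $\sigma^*$ and $x$ actually enter.

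Second, positivity $\beta > 0$ follows from Proposition~\ref{prop:orderpres}: for two indices $i, j$ in the positive support, the identity $\beta(p_i - p_j) = x_i - x_j$ together with order preservation excludes $\beta < 0$, and $\beta = 0$ is incompatible with the two norm constraints pinning down $p$ uniquely. Third, uniqueness of the pair $(\alpha, \beta)$ follows from the observation that the non-degeneracy established above, combined with $x\notin D$, forces the existence of two support indices $i, j$ with $x_i \neq x_j$, so that the linear system $\beta p_i + \alpha = x_i$, $\beta p_j + \alpha = x_j$ determines $\alpha$ and $\beta$ uniquely.
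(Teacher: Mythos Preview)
Your approach is essentially the paper's: Lagrangian/KKT, complementary slackness to split coordinates into positive and zero, Proposition~\ref{prop:orderpres} to force $\beta>0$, and uniqueness from two support indices with distinct $x$-values. The one substantive addition in the paper is that it substitutes the soft-threshold form back into the two norm constraints and solves the resulting quadratic explicitly, obtaining
\[
\alpha = \frac{1}{d}\Bigl(\norm{\tilde{x}}_1 - \lambda_1\sqrt{\tfrac{d\norm{\tilde{x}}_2^2 - \norm{\tilde{x}}_1^2}{d\lambda_2^2 - \lambda_1^2}}\Bigr),
\qquad
\beta = \sqrt{\tfrac{d\norm{\tilde{x}}_2^2 - \norm{\tilde{x}}_1^2}{d\lambda_2^2 - \lambda_1^2}},
\]
with the correct root of the quadratic selected by the order-preservation argument; these closed forms are used repeatedly downstream (Remark~\ref{rem:psi_neighbors}, Lemma~\ref{lem:projfuncgrad}), so you will eventually need them. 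Your explicit attention to constraint qualification is more than the paper provides---it simply asserts the KKT conditions---and you have correctly located the one genuinely delicate spot, namely the degenerate case where $x$ is constant on the support of $p$. The paper shares this gap: both its sign argument for $\beta$ and its uniqueness argument invoke indices $i,j\in I$ with $x_i>x_j$ without justifying their existence.
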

\begin{proof}
We want to find a point $p\in D$ such that the Euclidean distance $\norm{p - x}_2$ is minimized.
Such a point is guaranteed to exist by the Weierstra\ss\ extreme value theorem.
The constrained optimization problem leads to the Lagrangian $\lag\colon\R^n\times\R\times\R\times\R_{\geq 0}^n\to\R$,
\begin{displaymath}
  (p,\ \alpha,\ \beta,\ \gamma)\mapsto \tfrac{1}{2}\norm{p - x}_2^2 + \alpha\left(\norm{p}_1 - \lambda_1\right) + \tfrac{\beta - 1}{2}\big(\norm{p}_2^2 - \lambda_2^2\big) - \gamma\transp p\text{,}
\end{displaymath}
where the multiplier $\beta$ was linearly transformed for notational convenience.
By taking the derivative for $p$ and setting it to zero we obtain
\begin{displaymath}
  \frac{\partial\lag}{\partial p_i} = \beta p_i - x_i + \alpha - \gamma_i \musteq 0\text{, and hence }
  p_i = \frac{x_i - \alpha + \gamma_i}{\beta}\text{ for all }i\inint{1}{n}\text{.}
\end{displaymath}
The complementary slackness condition, $\gamma_ip_i = 0$ for all $i\inint{1}{n}$, must be satisfied in a local minimum of $\lag$.
Hence $p_i > 0$ implies $\gamma_i = 0$ and $p_i = 0$ implies $\gamma_i\geq 0$ for all $i\inint{1}{n}$.
Let $I := \set{i\inint{1}{n} | p_i > 0}$ denote the set of coordinates in which $p$ does not vanish, and let $d := \abs{I}$ denote its cardinality.
We have $d\geq 2$, because $d = 0$ is impossible due to $\lambda_1,\lambda_2 > 0$ and $d = 1$ is impossible because $\lambda_1\neq \lambda_2$.
Further, $\gamma_i = 0$ for all $i\in I$ from the complementary slackness condition.
Let $\tilde{x}\in\R_{\geq 0}^d$ be the vector with all entries from $x$ with index in $I$, that is when $I = \discint{i_1}{i_d}$ then $\tilde{x}\transp = \left(x_{i_1},\ \dots,\ x_{i_d}\right)$.
Note that because all entries of $p$ and $x$ are non-negative, the sum over their entries is identical to their $L_1$ norm.
By taking the derivative of the Lagrangian for $\alpha$ and setting it to zero we have that
\begin{displaymath}
  \lambda_1
  = \norm{p}_1
  = \sum_{i\in I}p_i
  = \sum_{i\in I}\tfrac{1}{\beta}\left(x_i - \alpha\right)
  = \tfrac{1}{\beta}\left(\norm{\tilde{x}}_1 - d\alpha\right)\text{.}
\end{displaymath}
Analogously, taking the derivative for $\beta$ and setting it to zero yields
\begin{displaymath}
  \lambda_2^2
  = \norm{p}_2^2
  = \tfrac{1}{\beta^2}\sum_{i\in I}\left(x_i - \alpha\right)^2
  = \tfrac{1}{\beta^2}\big(\norm{\tilde{x}}_2^2 - 2\alpha\norm{\tilde{x}}_1 + d\alpha^2\big)\text{.}
\end{displaymath}
By squaring the expression for $\lambda_1$ and dividing by $\lambda_2^2$ we get
\begin{displaymath}
  \frac{\lambda_1^2}{\lambda_2^2}
  = \frac{\norm{\tilde{x}}_1^2 - 2d\alpha\norm{\tilde{x}_1} + d^2\alpha^2}{\norm{\tilde{x}}_2^2 - 2\alpha\norm{\tilde{x}}_1 + d\alpha^2}\text{,}
\end{displaymath}
which leads to the quadratic equation
\begin{displaymath}
  0
  =   \underbrace{d\left(d - \tfrac{\lambda_1^2}{\lambda_2^2}\right)}_{=: a}\cdot\alpha^2
    + \underbrace{2\norm{\tilde{x}}_1\left(\tfrac{\lambda_1^2}{\lambda_2^2} - d\right)}_{=: b}\cdot\alpha
    + \underbrace{\left(\norm{\tilde{x}}_1^2 - \tfrac{\lambda_1^2}{\lambda_2^2}\norm{\tilde{x}}_2^2\right)}_{=: c}\text{.}
\end{displaymath}
Before considering the discriminant of this equation, we first note that $d\norm{\tilde{x}}_2^2 - \norm{\tilde{x}}_1^2 \geq 0$ with Remark~\ref{rem:norms_equivalent}.
As $p$ exists by the Weierstra\ss\ extreme value theorem and has by definition $d$ non-zero entries, we also have that $d - \frac{\lambda_1^2}{\lambda_2^2} \geq 0$ using Remark~\ref{rem:norms_equivalent}.
Thus we obtain
\begin{align*}
  D
  &:= b^2 - 4ac
  =   4\norm{\tilde{x}}_1^2\left(d - \tfrac{\lambda_1^2}{\lambda_2^2}\right)^2
    - 4d\left(d - \tfrac{\lambda_1^2}{\lambda_2^2}\right)\left(\norm{\tilde{x}}_1^2 - \tfrac{\lambda_1^2}{\lambda_2^2}\norm{\tilde{x}}_2^2\right)\\
  &= 4\tfrac{\lambda_1^2}{\lambda_2^2}\left(d - \tfrac{\lambda_1^2}{\lambda_2^2}\right)\left(d\norm{\tilde{x}}_2^2 - \norm{\tilde{x}}_1^2\right)
  \geq 0\text{,}
\end{align*}
so $\alpha$ must be a real number.
Solving the equation leads to two possible values for $\alpha$:
\begin{displaymath}
  \alpha\in\Set{\frac{-b\pm\sqrt{D}}{2a}}
  = \Set{\frac{1}{d}\left(\norm{\tilde{x}}_1\pm \lambda_1\sqrt{\frac{d\norm{\tilde{x}}_2^2 - \norm{\tilde{x}}_1^2}{d\lambda_2^2 - \lambda_1^2}}\right)}\text{.}
\end{displaymath}
We first assume that $\alpha$ is the number that arises from the "$+$" before the square root.
From $\lambda_1 = \norm{p}_1$ we then obtain
\begin{displaymath}
  \beta
  = \tfrac{1}{\lambda_1}\left(\norm{\tilde{x}}_1 - d\alpha\right)
  = -\sqrt{\frac{d\norm{\tilde{x}}_2^2 - \norm{\tilde{x}}_1^2}{d\lambda_2^2 - \lambda_1^2}}
  < 0\text{.}
\end{displaymath}
With $d\geq 2$ there are two indices $i,j\in I$ with $x_i > x_j$.
The derivative of $\lag$ for $p$ and the complementary slackness condition then yield $p_i - p_j = \frac{1}{\beta}\left(x_i - \alpha - x_j + \alpha\right) = \frac{1}{\beta}\left(x_i - x_j\right) < 0$, which contradicts the order-preservation as guaranteed by Proposition~\ref{prop:orderpres}.
Therefore, the choice of $\alpha$ was not correct in the first place, and thus
\begin{displaymath}
  \alpha = \frac{1}{d}\left(\norm{\tilde{x}}_1 - \lambda_1\sqrt{\frac{d\norm{\tilde{x}}_2^2 - \norm{\tilde{x}}_1^2}{d\lambda_2^2 - \lambda_1^2}}\right)
  \text{ and }
  \beta = \sqrt{\frac{d\norm{\tilde{x}}_2^2 - \norm{\tilde{x}}_1^2}{d\lambda_2^2 - \lambda_1^2}} > 0
\end{displaymath}
must hold.
Let $i\in I$, then $0 < p_i = \frac{1}{\beta}\left(x_i - \alpha\right)$, and because $\beta > 0$ follows $x_i > \alpha$.
For $i\not\in I$ it is $0 = p_i = \frac{1}{\beta}\left(x_i - \alpha + \gamma_i\right)$ where $\gamma_i \geq 0$, so $0 = x_i - \alpha + \gamma_i \geq x_i - \alpha$, or equivalently $x_i \leq \alpha$.
Ultimately, we have that $p_i = \max\big(\frac{1}{\beta}\left(x_i - \alpha\right),\ 0\big)$ for all $i\inint{1}{n}$.

For the claim to hold, it now remains to be shown that $\alpha$ and $\beta$ are unique.
With the uniqueness of the projection $p$, we thus have to show that from
\begin{displaymath}
  p = \max\big(\tfrac{1}{\beta_1}\left(x - \alpha_1\cdot e\right),\ 0\big) = \max\big(\tfrac{1}{\beta_2}\left(x - \alpha_2\cdot e\right),\ 0\big)
\end{displaymath}
for $\alpha_1,\alpha_2,\beta_1,\beta_2\in\R$ follows that $\alpha_1 = \alpha_2$ and $\beta_1 = \beta_2$.
As shown earlier, there are two distinct indices $i,j\in I$ with $x_i\neq x_j$ and $p_i,p_j > 0$.
We hence obtain
\begin{displaymath}
  p_i = \tfrac{1}{\beta_1}\left(x_i - \alpha_1\right) = \tfrac{1}{\beta_2}\left(x_i - \alpha_2\right)\text{ and }
  p_j = \tfrac{1}{\beta_1}\left(x_j - \alpha_1\right) = \tfrac{1}{\beta_2}\left(x_j - \alpha_2\right)\text{,}
\end{displaymath}
and thus $\tfrac{p_i}{p_j} = \tfrac{x_i - \alpha_1}{x_j - \alpha_1} = \tfrac{x_i - \alpha_2}{x_j - \alpha_2}$.
Therefore,
\begin{displaymath}
  0
  = (x_i - \alpha_1)(x_j - \alpha_2) - (x_i - \alpha_2)(x_j - \alpha_1)
  = \alpha_1(x_i - x_j) - \alpha_2(x_i - x_j)
  = (\alpha_1 - \alpha_2)(x_i - x_j)\text{.}
\end{displaymath}
With $x_i\neq x_j$ we have that $\alpha_1 = \alpha_2$, and substitution in either of $p_i$ or $p_j$ shows that $\beta_1 = \beta_2$.
\end{proof}
We note that the crucial point in the computation of $\alpha$ is finding the set $I$ where the projection has positive coordinates.
With the statement of Proposition~\ref{prop:orderpres} the argument of the projection can be sorted before-hand such that $I = \discint{1}{d}$ and therefore only a number linear in $n$ of feasible index sets has to be checked.
This is essentially the method proposed by~\cite{Potluru2013}.
The drawback of this approach is that the time complexity is quasilinear in $n$ because of the sorting, and the space complexity is linear in $n$ because the permutation has to be remembered to be undone afterwards.

\section{Finding the Zero of the Auxiliary Function}
\label{sect:auxfunc}
Lemma~\ref{lem:representation} gives a compact expression that characterizes projections onto $D$.
We first note that the representation only depends on one number:
\begin{remark}
\label{rem:proj1d}
Let $x\in\R_{\geq 0}^n\setminus D$ such that $\proj_D(x)$ is unique.
Then there is exactly one $\alpha\in\R$ such that $\proj_D(x) = \frac{\lambda_2\cdot\max\left(x - \alpha\cdot e,\ 0\right)}{\norm{\max\left(x - \alpha\cdot e,\ 0\right)}_2}$.
\end{remark}
\begin{proof}
The projection becomes $\proj_D(x) = \max\big(\frac{1}{\beta}\left(x - \alpha\cdot e\right),\ 0\big)$ with unique numbers $\alpha\in\R$ and $\beta\in\R_{>0}$ due to Lemma~\ref{lem:representation}.
With $\beta > 0$ we have that $\lambda_2 = \bnorm{\max\big(\frac{1}{\beta}\left(x - \alpha\cdot e\right),\ 0\big)}_2 = \frac{1}{\beta}\norm{\max\left(x - \alpha\cdot e,\ 0\right)}_2$, and the claim follows.
\end{proof}
It can hence be concluded that the sparseness projection can be considered a soft variant of thresholding~\cite{Hyvaerinen1999a}:
\begin{definition}
The function $\shrink_\alpha\colon\R\to\R$, $x\mapsto\max(x - \alpha,\ 0)$, is called \emph{soft-shrinkage function}, where $\alpha\in\R$.
It is continuous on $\R$ and differentiable exactly on $\R\setminus\set{\alpha}$.
\end{definition}
With Remark~\ref{rem:proj1d} we know that we only have to find one scalar to compute projections onto $D$. Analogous to the projection onto a simplex~\cite{Liu2009a}, we can thus define an auxiliary function which vanishes exactly at the number that yields the projection:
\begin{definition}
Let $x\in\R_{\geq 0}^n\setminus D$ such that $\proj_D(x)$ is unique and $\sigma(x) < \sigma^*$.
Let the maximum entry of $x$ be denoted by $x_{\max}:= \max_{i\inint{1}{n}}x_i$.
Then the function
\begin{displaymath}
  \Psi\colon\intervalco{0}{x_{\max}}\to\R\text{,}\qquad \alpha\mapsto\frac{\norm{\max\left(x - \alpha\cdot e,\ 0\right)}_1}{\norm{\max\left(x - \alpha\cdot e,\ 0\right)}_2} - \frac{\lambda_1}{\lambda_2}
\end{displaymath}
is called \emph{auxiliary function} to the projection onto $D$.
\end{definition}
Note that the case of $\sigma(x) \geq \sigma^*$ is trivial, because in this sparseness-decreasing setup we have that all coordinates of the projection must be positive.
Hence $I = \discint{1}{n}$ in the proof of Lemma~\ref{lem:representation}, and the shifting scalar $\alpha$ can be computed from a closed-form expression.

We further fix some notation for convenience:
\begin{definition}
Let $x\in\R_{\geq 0}^n$ be a vector.
Then we write $\X := \set{x_i | i\inint{1}{n}}$ for the set of entries of $x$.
Further, let $x_{\min} := \min\X$ be short for the smallest entry of $x$, and $x_{\max} := \max\X$ and $x_{\scndmax} := \max\X\setminus \set{x_{\max}}$ denote the two largest entries of $x$.

Let $q\colon\R\to\R^n$, $\alpha\mapsto \max\left(x - \alpha\cdot e,\ 0\right)$, denote the curve that evolves from entry-wise application of the soft-shrinkage function.
Let the Manhattan norm and Euclidean norm of points from $q$ be given by $\ell_1\colon\R\to\R$, $\alpha\mapsto\norm{q(\alpha)}_1$, and $\ell_2\colon\R\to\R$, $\alpha\mapsto\norm{q(\alpha)}_2$, respectively.
We thus have that $\Psi = \frac{\ell_1}{\ell_2} - \frac{\lambda_1}{\lambda_2}$, and we find that $q(\alpha)\neq 0$ if and only if $\alpha < x_{\max}$.
\end{definition}
In order to efficiently find the zeros of $\Psi$ we first investigate its analytical properties.
See Figure~\ref{fig:projfunclin-toy} for an example of $\Psi$ that provides orientation for the next result.
\begin{figure}[t]
  \includegraphics[width=\textwidth]{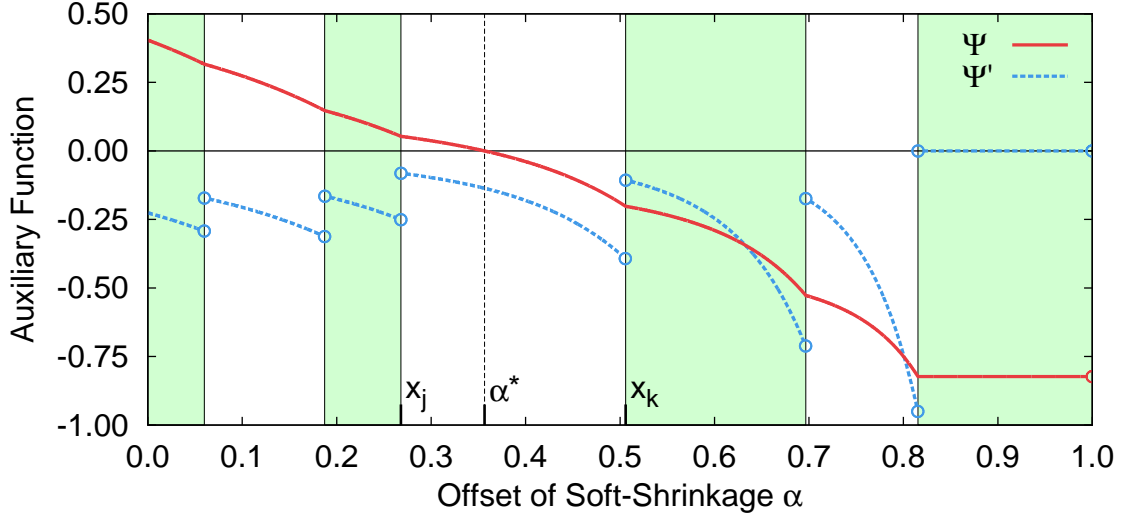}
  \caption{Plot of the auxiliary function $\Psi$ and its derivative for a random vector $x$ (see Lemma~\ref{lem:auxprops} for an analysis). The derivative $\Psi'$ was scaled using a positive number for improved visibility. The steps in $\Psi'$ are exactly the places where $\alpha$ coincides with an entry of $x$. With Remark~\ref{rem:psi_neighbors}, it is sufficient to find an $\alpha$ such that $\Psi(x_j) \geq 0$ and $\Psi(x_k) < 0$ for the neighboring entries $x_j$ and $x_k$ in $x$, because then the exact solution $\alpha^*$ can be computed with a closed-form expression.}
  \label{fig:projfunclin-toy}
\end{figure}

\begin{lemma}
\label{lem:auxprops}
Let $x\in\R_{\geq 0}^n\setminus D$ be given such that the auxiliary function $\Psi$ is well-defined.
Then:
\begin{enumerate}
  \item \label{lem:auxprops_a} $\Psi$ is continuous on $\intervalco{0}{x_{\max}}$.
  \item \label{lem:auxprops_b} $\Psi$ is differentiable on $\intervalco{0}{x_{\max}}\setminus\X$.
  \item \label{lem:auxprops_c} $\Psi$ is strictly decreasing on $\intervalco{0}{x_{\scndmax}}$ and constant on $\intervalco{x_{\scndmax}}{x_{\max}}$.
  \item \label{lem:auxprops_d} There is exactly one $\alpha^*\in\intervaloo{0}{x_{\scndmax}}$ with $\Psi(\alpha^*) = 0$.
  \item \label{lem:auxprops_e} $\proj_D(x) = \frac{\lambda_2\cdot\max\left(x - \alpha^*\cdot e,\ 0\right)}{\norm{\max\left(x - \alpha^*\cdot e,\ 0\right)}_2}$ where $\alpha^*$ is the zero of $\Psi$.
\end{enumerate}
\end{lemma}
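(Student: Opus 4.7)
The plan is to treat the five claims in order, with parts (c) and (d) carrying the real content while (a), (b) and (e) are bookkeeping. For (a) and (b), I would observe that each coordinate $q_i(\alpha)=\max(x_i-\alpha,\,0)$ is continuous on $\R$, hence so are $\ell_1=\sum_i q_i$ and $\ell_2^2=\sum_i q_i^2$. On $\intervalco{0}{x_{\max}}$ we have $q(\alpha)\neq 0$, so $\ell_2>0$ there and $\Psi=\ell_1/\ell_2-\lambda_1/\lambda_2$ is continuous. For (b), $q_i$ fails to be differentiable exactly at $\alpha=x_i$, but $q_i^2$ has matching one-sided derivatives equal to zero at that point, so $q_i^2$ is of class $C^1$ on $\R$. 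Hence $\ell_2^2$ is differentiable on all of $\R$ and $\ell_2$ is differentiable wherever $\ell_2>0$, whereas $\ell_1$ picks up a kink at every point of $\X$; this yields differentiability of $\Psi$ exactly on $\intervalco{0}{x_{\max}}\setminus\X$.

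For (c), on any open subinterval of $\intervalco{0}{x_{\max}}$ that avoids $\X$ the active set $I(\alpha):=\set{i | x_i>\alpha}$ is constant with some cardinality $d$. Writing $\tilde x$ for the subvector indexed by $I(\alpha)$, one has $\ell_1(\alpha)=\norm{\tilde x}_1-d\alpha$ and $\ell_2(\alpha)^2=\norm{\tilde x}_2^2-2\alpha\norm{\tilde x}_1+d\alpha^2$, from which a short computation gives $\ell_2'=-\ell_1/\ell_2$ and $(\ell_1/\ell_2)'=(\ell_1^2-d\ell_2^2)/\ell_2^3$. Remark~\ref{rem:norms_equivalent} applied to the $d$-sparse vector $\tilde x-\alpha e$ gives $\ell_1\leq\sqrt{d}\,\ell_2$, with equality exactly when the entries of $\tilde x$ are all identical. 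On $\intervalco{0}{x_{\scndmax}}$ the set $I(\alpha)$ contains an index with $x_i=x_{\max}$ together with one attaining $x_{\scndmax}<x_{\max}$, so the inequality is strict and, via continuity, $\Psi$ is strictly decreasing. On $\intervalco{x_{\scndmax}}{x_{\max}}$ only the indices with $x_i=x_{\max}$ survive, equality holds, and $\ell_1/\ell_2=\sqrt{d_0}$ is constant, where $d_0:=\abs{\set{i|x_i=x_{\max}}}$.

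The main obstacle is (d), namely locating a unique zero inside $\intervaloo{0}{x_{\scndmax}}$. The hypothesis $\sigma(x)<\sigma^*$ rewrites as $\norm{x}_1/\norm{x}_2>\lambda_1/\lambda_2$, which at once gives $\Psi(0)>0$. To produce a sign change I would invoke Lemma~\ref{lem:representation}: it supplies unique $\alpha^*\in\R$ and $\beta^*>0$ with $\proj_D(x)=\max\bigl(\tfrac{1}{\beta^*}(x-\alpha^* e),\,0\bigr)$, and dividing $\norm{\proj_D(x)}_1=\lambda_1$ by $\norm{\proj_D(x)}_2=\lambda_2$ gives $\ell_1(\alpha^*)/\ell_2(\alpha^*)=\lambda_1/\lambda_2$, i.e.\ $\Psi(\alpha^*)=0$ once $\alpha^*$ is shown to lie in $\intervalco{0}{x_{\max}}$. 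The upper bound $\alpha^*<x_{\scndmax}$ is immediate from $\abs{I}\geq 2$ in the proof of Lemma~\ref{lem:representation}, since at least two coordinates of $x$ must strictly exceed $\alpha^*$. The lower bound rules out $\alpha^*=0$ (which would force $\proj_D(x)=x/\beta^*$, hence $\sigma(\proj_D(x))=\sigma(x)<\sigma^*$, absurd) and $\alpha^*<0$ (for which $I=\discint{1}{n}$ and the strict-decrease calculation from (c) with $d=n$ forces $\ell_1(\alpha^*)/\ell_2(\alpha^*)>\norm{x}_1/\norm{x}_2>\lambda_1/\lambda_2$, using that $x$ cannot be a constant vector if $\proj_D(x)$ is unique). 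Uniqueness of the zero then follows from strict monotonicity on $\intervalco{0}{x_{\scndmax}}$ together with the fact that the constant value of $\Psi$ on $\intervalco{x_{\scndmax}}{x_{\max}}$ must lie strictly below $\Psi(\alpha^*)=0$. Part (e) is then an immediate combination of this identification of $\alpha^*$ with Remark~\ref{rem:proj1d}.
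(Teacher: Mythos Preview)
Your argument is correct and largely parallels the paper's for parts \ref{lem:auxprops_a}--\ref{lem:auxprops_c} and \ref{lem:auxprops_e}: both compute $\Psi'(\alpha)=\ell_2(\alpha)^{-3}\bigl(\ell_1(\alpha)^2-d\,\ell_2(\alpha)^2\bigr)$ on the open intervals between consecutive elements of $\X$, invoke Remark~\ref{rem:norms_equivalent} for the sign, and patch across the kinks by continuity. Your observation that $q_i^2$ is $C^1$ even at $\alpha=x_i$, and your handling of the possibility $d_0:=\abs{\set{i\mid x_i=x_{\max}}}>1$ on the constant tail, are refinements the paper does not make explicit (the paper simply asserts $d=1$ there).

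The genuine divergence is in part \ref{lem:auxprops_d}. The paper argues directly: $\Psi(0)>0$ from $\sigma(x)<\sigma^*$, while on $\intervalco{x_{\scndmax}}{x_{\max}}$ it claims $\ell_1=\ell_2$ so $\Psi=1-\lambda_1/\lambda_2<0$, and then the intermediate value theorem plus strict monotonicity finish. You instead import the $\alpha^*$ from Lemma~\ref{lem:representation}, verify $\Psi(\alpha^*)=0$, and then argue separately that $\alpha^*>0$ (via scale-invariance of $\sigma$ and the non-constancy of $x$) and $\alpha^*<x_{\scndmax}$ (via the existence of two indices $i,j\in I$ with $x_i>x_j$ in that proof). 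What this buys you: the paper's sign argument on the tail tacitly uses $d_0=1$, whereas your route sidesteps this and in fact forces, a posteriori, that the constant tail value $\sqrt{d_0}-\lambda_1/\lambda_2$ is negative. What the paper's route buys: it keeps \ref{lem:auxprops_d} self-contained (pure real analysis on $\Psi$), deferring any appeal to Lemma~\ref{lem:representation} until \ref{lem:auxprops_e}; in your version \ref{lem:auxprops_e} becomes essentially a tautology since you already identified $\alpha^*$ with the representation parameter.
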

\begin{proof}
\ref{lem:auxprops_a}
$q$ is continuous because the soft-shrinkage function is continuous.
Hence so are $\ell_1$ and $\ell_2$, and hence $\Psi$ as compositions of continuous functions.

\ref{lem:auxprops_b}
The soft-shrinkage function causes $\Psi$ to be differentiable exactly on $\intervalco{0}{x_{\max}}\setminus\X$.
Now let $x_j < x_k$ be two successive elements from $\X$, such that there is no element from $\X$ between them.
In the case that $x_k = x_{\min}$ it can be assumed that $x_j = 0$.
Then the index set $I := \set{i\inint{1}{n} | x_i > \alpha}$ of non-vanishing coordinates in $q$ is constant for $\alpha\in\intervaloo{x_j}{x_k}$, and the derivative of $\Psi$ can be computed using a closed-form expression.
For this, let $d := \abs{I}$ denote the number of nonzero entries in $q$.
With $\ell_1(\alpha) = \sum_{i\in I}\left(x_i - \alpha\right) = \sum_{i\in I}x_i - d\alpha$ we obtain $\ell_1'(\alpha) = -d$.
Analogously, it is $\frac{\partial}{\partial\alpha}\ell_2(\alpha)^2 = \frac{\partial}{\partial\alpha}\sum_{i\in I}\left(x_i - \alpha\right)^2 = -2 \sum_{i\in I}\left(x_i - \alpha\right) = -2\ell_1(\alpha)$,
and hence $\ell_2'(\alpha) = \frac{\partial}{\partial\alpha}\sqrt{\ell_2(\alpha)^2} = \frac{1}{2}\ell_2(\alpha)^{-1}\frac{\partial}{\partial\alpha}\ell_2(\alpha)^2 = -\frac{\ell_1(\alpha)}{\ell_2(\alpha)}$.
Therefore, the quotient rule yields
\begin{displaymath}
  \Psi'(\alpha)
  = \frac{-d\ell_2(\alpha) + \ell_1(\alpha)\frac{\ell_1(\alpha)}{\ell_2(\alpha)}}{\ell_2(\alpha)^2}
  = \frac{1}{\ell_2(\alpha)}\left(\frac{\ell_1(\alpha)^2}{\ell_2(\alpha)^2} - d\right)\text{.}
\end{displaymath}
It can further be shown that higher derivatives are of similar form.
We have that $\frac{\partial}{\partial\alpha}\ell_1(\alpha)^2 = 2\ell_1(\alpha)\ell_1'(\alpha) = -2d\ell_1(\alpha)$, and thus
\begin{displaymath}
  \frac{\partial}{\partial\alpha}\frac{\ell_1(\alpha)^2}{\ell_2(\alpha)^2}
  = \frac{-2d\ell_1(\alpha)\ell_2(\alpha)^2 + 2\ell_1(\alpha)^ 3}{\ell_2(\alpha)^4}
  = 2\frac{\ell_1(\alpha)}{\ell_2(\alpha)^2}\left(\frac{\ell_1(\alpha)^2}{\ell_2(\alpha)^2} - d\right)\text{.}
\end{displaymath}
We also obtain $\frac{\partial}{\partial\alpha}\frac{1}{\ell_2(\alpha)} = \frac{-\ell_2'(\alpha)}{\ell_2(\alpha)^2} = \frac{\ell_1(\alpha)}{\ell_2(\alpha)^3}$, and eventually
\begin{displaymath}
  \Psi''(\alpha)
  = \frac{\ell_1(\alpha)}{\ell_2(\alpha)^3}\left(\frac{\ell_1(\alpha)^2}{\ell_2(\alpha)^2} - d\right)
    + \frac{2}{\ell_2(\alpha)}\frac{\ell_1(\alpha)}{\ell_2(\alpha)^2}\left(\frac{\ell_1(\alpha)^2}{\ell_2(\alpha)^2} - d\right)
  = 3\frac{\ell_1(\alpha)}{\ell_2(\alpha)^3}\left(\frac{\ell_1(\alpha)^2}{\ell_2(\alpha)^2} - d\right)\text{,}
\end{displaymath}
or in other words $\frac{\Psi''(\alpha)}{\Psi'(\alpha)} = 3 \frac{\ell_1(\alpha)}{\ell_2(\alpha)^2}$.

\ref{lem:auxprops_c}
First let $\alpha\in\intervaloo{x_{\scndmax}}{x_{\max}}$.
With the notation of~\ref{lem:auxprops_b} we then have that $d = 1$, such that $q$ has exactly one non-vanishing coordinate.
Hence, $\ell_1(\alpha) = \ell_2(\alpha)$ and $\Psi'\equiv 0$ on $\intervaloo{x_{\scndmax}}{x_{\max}}$, thus $\Psi$ is constant on $\intervaloo{x_{\scndmax}}{x_{\max}}$ as a consequence of the mean value theorem from real analysis.
Because $\Psi$ is continuous, it is constant even on $\intervalco{x_{\scndmax}}{x_{\max}}$.

Next let $\alpha\in\intervalco{0}{x_{\scndmax}}\setminus\X$, then $d\geq 2$ and $\ell_1(\alpha) \leq \sqrt{d}\ell_2(\alpha)$ with Remark~\ref{rem:norms_equivalent}.
The inequality is in fact strict, because $q(\alpha)$ has at least two distinct nonzero entries.
This implies that $\Psi' < 0$ on $\intervaloo{x_j}{x_k}$ where $x_j < x_k$ are neighbors of $\alpha$ as in~\ref{lem:auxprops_b}.
The mean value theorem then guarantees that $\Psi$ is strictly decreasing between neighboring elements from $\X$.
This property holds then for the entire interval $\intervalco{0}{x_{\scndmax}}$ due to the continuity of $\Psi$.

\ref{lem:auxprops_d}
We have by requirement that $\sigma(x) < \sigma^*$, and therefore $\frac{\norm{x}_1}{\norm{x}_2} > \frac{\lambda_1}{\lambda_2}$, and so $\Psi(0) > 0$.
For $\alpha\in\intervaloo{x_{\scndmax}}{x_{\max}}$ we obtain $\ell_1(\alpha) = \ell_2(\alpha)$ as in~\ref{lem:auxprops_c}.
It is then $\Psi(\alpha) < 0$ using $\lambda_2 < \lambda_1$.
The existence of $\alpha^*\in\intervalco{0}{x_{\scndmax}}$ with $\Psi(\alpha^*) = 0$ follows from the intermediate value theorem and~\ref{lem:auxprops_c}.
Uniqueness of $\alpha^*$ is guaranteed because $\Psi$ is strictly monotone.

\ref{lem:auxprops_e}
With Remark~\ref{rem:proj1d} there is exactly one $\tilde{\alpha}\in\R$ such that $\proj_D(x) = \frac{\lambda_2\cdot\max\left(x - \tilde{\alpha}\cdot e,\ 0\right)}{\norm{\max\left(x - \tilde{\alpha}\cdot e,\ 0\right)}_2}$.
We see that $\Psi(\tilde{\alpha}) = 0$, and the uniqueness of the zero of $\Psi$ implies that $\alpha^* = \tilde{\alpha}$.
\end{proof}
The unique zero of the auxiliary function can be found numerically using standard root-finding algorithms, such as Bisection, Newton's method or Halley's method~\cite{Traub1964}.
We can improve on this by noting that whenever a number is found in a certain interval, then the exact value of the zero of $\Psi$ can already be computed.
\begin{remark}
\label{rem:interval_existence}
Let $x\in\R_{\geq 0}^n\setminus D$ such that the auxiliary function $\Psi$ is well-defined.
Then there are two unique numbers $x_j < x_k$, where either $x_j = 0$ and $x_k = x_{\min}$ or $x_j,x_k\in\X$ such that there is no other element from $\X$ in between, such that $\Psi(x_j)\geq 0$ and $\Psi(x_k) < 0$ and there is an $\alpha^*\in\intervalco{x_j}{x_k}$ with $\Psi(\alpha^*) = 0$.
\end{remark}
\begin{proof}
Let $\alpha^*\in\intervaloo{0}{x_{\scndmax}}$ with $\Psi(\alpha^*) = 0$ be given with Lemma~\ref{lem:auxprops}.
When $\alpha^* < x_{\min}$ holds, existence follows immediately with Lemma~\ref{lem:auxprops} by setting $x_j := 0$ and $x_k := x_{\min}$.
Otherwise, define $x_j := \max\set{x_i | x_i\in\X\text{ and }x_i\leq\alpha^*}$ and $x_k := \min\set{x_i | x_i\in\X\text{ and }x_i > \alpha^*}$, 
which both exist as the sets where the maximum and the minimum is taken are nonempty.
Clearly these two numbers fulfill the condition from the claim by Lemma~\ref{lem:auxprops}.
The bracketing by $x_j$ and $x_k$ is unique because $\alpha^*$ is in both cases unique with Lemma~\ref{lem:auxprops}.
\end{proof}
We further note that it is easy to check whether the correct interval has already been found, and give a closed-form expression for the zero of $\Psi$ in this case:
\begin{remark}
\label{rem:psi_neighbors}
Let $x\in\R_{\geq 0}^n\setminus D$ such that the auxiliary function $\Psi$ is well-defined and let $\alpha\in\intervalco{0}{x_{\max}}$.
If $\alpha < x_{\min}$ define $x_j := 0$ and $x_k := x_{\min}$, otherwise let $x_j \leq \alpha < x_k$ with $x_j,x_k\in\X$ such that there is no element from $\X$ between $x_j$ and $x_k$.
Let $I := \set{i\inint{1}{n} | x_i > \alpha} = \discint{i_1}{i_d}$ where $d := \abs{I}$ and $\tilde{x}\in\R_{\geq 0}^d$ such that $\tilde{x}\transp = \left(x_{i_1},\dots,x_{i_d}\right)$.
Then the following holds:
\begin{enumerate}
  \item \label{rem:psi_neighbors_a} $\ell_1(\xi) = \norm{\tilde{x}}_1 - d\xi$ and $\ell_2^2(\xi) = \norm{\tilde{x}}_2^2 - 2\xi\norm{\tilde{x}}_1 + d\xi^2$ for $\xi\in\set{x_j,\ \alpha,\ x_k}$.
  \item \label{rem:psi_neighbors_b} When $\lambda_2\ell_1(x_j) \geq \lambda_1\ell_2(x_j)$ and $\lambda_2\ell_1(x_k) < \lambda_1\ell_2(x_k)$ hold, then
\begin{displaymath}
  \alpha^* := \frac{1}{d}\left(\norm{\tilde{x}}_1 - \lambda_1\sqrt{\frac{d\norm{\tilde{x}}_2^2 - \norm{\tilde{x}}_1^2}{d\lambda_2^2 - \lambda_1^2}}\right)
\end{displaymath}
is the unique zero of $\Psi$.
\end{enumerate}
\end{remark}
\begin{proof}
\ref{rem:psi_neighbors_a}
We have that $\ell_1(\alpha) = \sum_{i\in I}(x_i - \alpha) = \sum_{i\in I}x_i - d\alpha = \norm{\tilde{x}}_1 - d\alpha$ and further $\ell_2(\alpha)^2 = \sum_{i\in I}(x_i - \alpha)^2 = \sum_{i\in I}\left(x_i^2 - 2\alpha x_i + \alpha^2\right) = \norm{\tilde{x}}_2^2 - 2\alpha\norm{\tilde{x}}_1 + d\alpha^2$.

Now let $K := \set{i\inint{1}{n} | x_i > x_k}$ and $\tilde{K} := \set{i\inint{1}{n} | x_i = x_k}$.
Then $K = I \setminus \tilde{K}$, and thus $\ell_1(x_k) = \sum_{i\in K}(x_i - x_k) = \sum_{i\in I}(x_i - x_k) - \sum_{i\in\tilde{K}}(x_i - x_k) = \sum_{i\in I}(x_i - x_k) = \norm{\tilde{x}}_1 - d x_k$.
Likewise follows $\ell_2(x_k)^2 = \norm{\tilde{x}}_2^2 - 2 x_k\norm{\tilde{x}}_1 + d x_k^2$.

Finally, let $J := \set{i\inint{1}{n} | x_i > x_j}$ and $\tilde{J} := \set{i\inint{1}{n} | x_i = x_j}$.
Then $I = J\setminus\tilde{J}$, and we obtain $\ell_1(x_j) = \sum_{i\in J}(x_i - x_j) = \sum_{i\in I}(x_i - x_j) + \sum_{i\in\tilde{J}}(x_i - x_j) = \sum_{i\in I}(x_i - x_j) = \norm{\tilde{x}}_1 - d x_j$.
The claim for $\ell_2(x_j)^2$ follows analogously.

\ref{rem:psi_neighbors_b}
The condition from the claim is equivalent to $\Psi(x_j) \geq 0$ and $\Psi(x_k) < 0$.
Hence with Remark~\ref{rem:interval_existence} there is an $\alpha^*\in\intervalco{x_j}{x_k}$ with $\Psi(\alpha^*) = 0$.
Let $p := \proj_D(x)$ be the projection of $x$ onto $D$ and define $J := \set{i\inint{1}{n} | p_i > 0}$.
Lemma~\ref{lem:auxprops}\ref{lem:auxprops_e} implies $J = \set{i\inint{1}{n} | x_i > \alpha^*}$.
Furthermore, it is $J = \set{i\inint{1}{n} | x_i > x_j} = \set{i\inint{1}{n} | x_i > \alpha} = I$.
Thus we already had the correct set of non-vanishing coordinates of the projection in the first place, and the expression for $\alpha^*$ follows from the proof of Lemma~\ref{lem:representation}.
\end{proof}

\section{A Linear Time and Constant Space Projection Algorithm}
By exploiting the analytical properties of $\Psi$, simple methods are sufficient to locate the interval in which its zero resides.
Because the interval has a positive length, simple Bisection is guaranteed to find it in a constant number of steps~\cite{Liu2009a}.
Empirically, we found that solvers that use the derivative of $\Psi$ converge faster, despite of the step discontinuities of $\Psi'$.
We have implemented Newton's method, Halley's method, and Newton's method applied to the slightly transformed auxiliary function $\tilde{\Psi} := \frac{\ell_1^2}{\ell_2^2} - \frac{\lambda_1^2}{\lambda_2^2}$.
These methods were additionally safeguarded with Bisection to guarantee new positions are located within well-defined bounds~\cite{Press2007}.
This does impair the theoretical property that only a constant number of steps be required to find a solution, but in practice a significantly smaller number of steps needs to be made compared to plain Bisection.
This is demonstrated through experimental results in Section~\ref{sect:experiments}.

We are now in a position to formulate the main result of this paper, by proposing an efficient algorithm for computing sparseness-enforcing projections:
\begin{theorem}
Algorithm~\ref{alg:projfunc} computes projections onto $D$, where unique, in a number of operations linear in the problem dimensionality $n$ and with only constant additional space.
\end{theorem}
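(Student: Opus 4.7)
The plan is to verify in turn that Algorithm~\ref{alg:projfunc} (i) returns the unique projection $\proj_D(x)$ whenever it exists, (ii) executes only $O(n)$ arithmetic operations, and (iii) uses $O(1)$ additional storage beyond the input.

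For correctness I would handle the two regimes separately. When $\sigma(x)\geq\sigma^*$, the argument following the definition of $\Psi$ shows that all coordinates of the projection are positive, so Lemma~\ref{lem:representation} applies with $I=\discint{1}{n}$ and the shifting scalar is read off directly from the closed-form expression of Remark~\ref{rem:psi_neighbors}\ref{rem:psi_neighbors_b} with $\tilde{x}=x$, after which the projection is assembled by one linear sweep. In the generic regime $\sigma(x)<\sigma^*$, Lemma~\ref{lem:auxprops}\ref{lem:auxprops_d} supplies a unique zero $\alpha^*$ of $\Psi$ in $\intervaloo{0}{x_{\scndmax}}$, Remark~\ref{rem:interval_existence} guarantees that $\alpha^*$ lies in a bracket $\intervalco{x_j}{x_k}$ formed by two consecutive members of $\X\cup\set{0}$, and Remark~\ref{rem:psi_neighbors}\ref{rem:psi_neighbors_b} gives a closed-form for $\alpha^*$ as soon as such a bracket has been located; Lemma~\ref{lem:auxprops}\ref{lem:auxprops_e} then produces the projection via another linear sweep. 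Hence the algorithmically substantive task is locating the bracket without sorting $x$.

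To do so I would adapt the partition strategy of~\cite{Liu2009a} for simplex projection to $\Psi$. Three cached scalars are maintained, namely the number $d$, the sum $S_1$ and the sum of squares $S_2$ of those entries of $x$ already known to lie strictly above the sought threshold; together with Remark~\ref{rem:psi_neighbors}\ref{rem:psi_neighbors_a} these suffice to evaluate $\Psi$ at any candidate $\alpha$ in time linear in the size of the still-undecided subrange, provided the entries above $\alpha$ have been identified. Each iteration picks a pivot $\pi$ from the active range of $x$ (via a linear-time median routine, or a random element in the expected-time variant), partitions the active range in place around $\pi$, evaluates $\Psi(\pi)$ from the cached scalars combined with a scan of the upper partition, and uses Lemma~\ref{lem:auxprops}\ref{lem:auxprops_c} together with the sign of $\Psi(\pi)$ to decide which side contains $\alpha^*$: if $\Psi(\pi)\geq 0$ the lower partition is discarded and the search continues on the upper partition; otherwise the upper partition is folded into $S_1,S_2,d$ and the search continues on the lower partition. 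The recursion halts as soon as the active range has collapsed so that $\pi$ and its successor in $\X$ satisfy the sign pattern of Remark~\ref{rem:psi_neighbors}\ref{rem:psi_neighbors_b}, at which point $\alpha^*$ is computed from the closed-form and the projection written back in place.

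For complexity, each iteration performs work proportional to the current active size, and the pivot choice shrinks that size by at least a constant factor, so the geometric series sums to $O(n)$. Only $d$, $S_1$, $S_2$, the running bracket endpoints $x_j,x_k$, and a pair of partition indices are retained across iterations, which is $O(1)$ additional space; the partitioning is performed in place on $x$ and the output overwrites $x$. The main obstacle I anticipate is the careful bookkeeping around duplicate entries and the degenerate left endpoint $x_j=0$: the partition must be strict around $\pi$ so that the cardinality $d$ used in the closed-form matches the index set $I=\set{i\inint{1}{n} | x_i>\alpha^*}$ from Remark~\ref{rem:psi_neighbors}\ref{rem:psi_neighbors_a}, and the very first iteration must recognise the case $\alpha^*<x_{\min}$ so that the bracket $\intervalco{0}{x_{\min}}$ is returned instead of a spurious pair drawn from $\X$.
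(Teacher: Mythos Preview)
Your correctness argument in the first two paragraphs tracks the paper's intended composition of Lemma~\ref{lem:representation}, Remark~\ref{rem:proj1d}, Lemma~\ref{lem:auxprops}, and Remarks~\ref{rem:interval_existence} and~\ref{rem:psi_neighbors}, and is fine.

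The complexity argument, however, analyses a different procedure from Algorithm~\ref{alg:projfunc}. Algorithm~\ref{alg:projfunc} does not partition the array around pivots nor maintain a shrinking active subrange; every iteration of its \texttt{while} loop calls Algorithm~\ref{alg:auxiliary}, which performs a full scan over all $n$ entries of $x$ and returns $\Psi(\alpha)$, its derivatives, and the flag $\finished$. The paper's linear-time claim therefore does not come from a geometric series over shrinking partitions; it rests on the assertion in Section~5 (citing~\cite{Liu2009a}) that the target bracket $\intervalco{x_j}{x_k}$ has positive length, so that Bisection locates it in a number of iterations that is constant in $n$. The experiments section makes explicit that this constant is bounded because the interval length is at least the machine precision. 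The total work is then (number of iterations)$\times O(n)=O(n)$, with only the scalars $\lo$, $\up$, $\alpha$, $d$, $\ell_1$, $\ell_2^2$ and the function values retained between iterations, giving $O(1)$ extra space.

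Your quickselect-style scheme would indeed yield an $O(n)$ algorithm in the exact-arithmetic model---arguably a cleaner guarantee than what the paper actually establishes---but it is not a proof of the stated theorem, which concerns Algorithm~\ref{alg:projfunc} as written. To prove that theorem you must argue that the root-finding loop terminates after $O(1)$ iterations; the paper does this by appealing to the positive length of the final bracket (bounded below in finite precision), not by partitioning.
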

The proof is omitted as it is essentially a composition of the results from Section~\ref{sect:auxfunc}.

\begin{algorithm}[t]
  \caption{Linear time and constant space evaluation of the auxiliary function $\Psi$.}
  \label{alg:auxiliary}
  \SetAlgoLined
  \KwIn{$x\in\R_{\geq 0}^n$, $\lambda_1, \lambda_2, \alpha\in\R$ with $0 < \lambda_2 < \lambda_1 < \sqrt{n}\lambda_2$ and $0 \leq \alpha < \max_{i\inint{1}{n}}x_i$.}
  \KwOut{$\Psi(\alpha), \Psi'(\alpha), \Psi''(\alpha), \tilde{\Psi}(\alpha), \tilde{\Psi}'(\alpha)\in\R$, $\finished\in\B$, $\ell_1, \ell_2^2\in\R$, $d\in\N$.}
  \BlankLine

  \tcp{Initialize.}
  $\ell_1 := 0$;\enspace $\ell_2^2 := 0$;\enspace $d := 0$;\enspace $x_j := 0$;\enspace $\Delta x_j := -\alpha$;\enspace $x_k := \infty$;\enspace $\Delta x_k := \infty$\;
  \BlankLine

  \tcp{Scan through $x$.}
  \For{$i := 1$ \KwTo $n$}
  {%
    $t := x_i - \alpha$\;
    \eIf{$t > 0$}
    {%
      $\ell_1 := \ell_1 + x_i$;\enspace $\ell_2^2 := \ell_2^2 + x_i^2$;\enspace $d := d + 1$\;
      \lIf{$t < \Delta x_k$}
      {%
        $x_k := x_i$;\enspace $\Delta x_k := t$;
      }
    }
    {%
      \lIf{$t > \Delta x_j$}
      {%
        $x_j := x_i$;\enspace $\Delta x_j := t$;
      }
    }
  }
  \BlankLine

  \tcp{Compute $\Psi(\alpha)$, $\Psi'(\alpha)$ and $\Psi''(\alpha)$.}
  $\ell_1(\alpha) := \ell_1 - d\alpha$;\enspace $\ell_2(\alpha)^2 := \ell_2^2 - 2\alpha\ell_1 + d\alpha^2$\;
  $\Psi(\alpha) := \frac{\ell_1(\alpha)}{\sqrt{\ell_2(\alpha)^2}} - \frac{\lambda_1}{\lambda_2}$;\enspace $\Psi'(\alpha) := \frac{1}{\sqrt{\ell_2(\alpha)^2}}\left(\frac{\ell_1(\alpha)^2}{\ell_2(\alpha)^2} - d\right)$;\enspace $\Psi''(\alpha) := \frac{3\Psi'(\alpha)\ell_1(\alpha)}{\ell_2(\alpha)^2}$\;
  \BlankLine

  \tcp{Compute $\tilde{\Psi}(\alpha)$ and $\tilde{\Psi}'(\alpha)$.}
  $\tilde{\Psi}(\alpha) := \frac{\ell_1(\alpha)^2}{\ell_2(\alpha)^2} - \frac{\lambda_1^2}{\lambda_2^2}$;\enspace $\tilde{\Psi}'(\alpha) := 2\frac{\ell_1(\alpha)}{\sqrt{\ell_2(\alpha)^2}}\Psi'(\alpha)$\;
  \BlankLine

  \tcp{Compute $\Psi(x_j)$ and $\Psi(x_k)$, check for sign change and return.}
  $\finished := \lambda_2(\ell_1 - dx_j) \geq \lambda_1\sqrt{\ell_2^2 - 2x_j\ell_1 + dx_j^2}$\enspace{\bf and}\enspace$\lambda_2\left(\ell_1 - dx_k\right) < \lambda_1\sqrt{\ell_2^2 - 2x_k\ell_1 + dx_k^2}$\;

  \KwRet $\left(\Psi(\alpha),\ \Psi'(\alpha),\ \Psi''(\alpha),\ \tilde{\Psi}(\alpha),\ \tilde{\Psi}'(\alpha),\ \finished,\ \ell_1,\ \ell_2^2,\ d\right)$\;
\end{algorithm}

\begin{algorithm}[p]
  \caption{Linear time and constant space projection onto $D$. The auxiliary function $\Psi$ is evaluated by calls to "$\aux$", which are carried out by Algorithm~\ref{alg:auxiliary}.}
  \label{alg:projfunc}
  \SetAlgoLined
  \SetKw{KwGoTo}{go to}
  \KwIn{$x\in\R_{\geq 0}^n$, $\lambda_1, \lambda_2\in\R$ with $0 < \lambda_2 < \lambda_1 < \sqrt{n}\lambda_2$, $\solver\in\set{\bisection,\ \newton,\ \newtonsqr,\ \halley}$.}
  \KwOut{$\proj_D(x)\in D$ where $D := S_{\geq 0}^{(\lambda_1,\lambda_2)}\subseteq\R_{\geq 0}^n$.}
  \BlankLine

  \tcp{Check whether sparseness should be increased or decreased.}
  $\left(\Psi(\alpha),\ \Psi'(\alpha),\ \Psi''(\alpha),\ \tilde{\Psi}(\alpha),\ \tilde{\Psi}'(\alpha),\ \finished,\ \ell_1,\ \ell_2^2,\ d\right) := \aux(x,\ \lambda_1,\ \lambda_2,\ 0)$\;
  \lIf(\quad\CommentSty{// Decrease sparseness, skip root-finding.}){$\Psi(\alpha) \leq 0$}{\KwGoTo Line~\ref{algl:projfunc-exact-alpha}}
  \BlankLine

  \tcp{Need to increase sparseness, initialize safeguarded root-finding.}
  $\lo := 0$;\enspace $\up := \max\set{x_i | i\inint{1}{n},\ x_i\neq\max_{j\inint{1}{n}}x_j}$;\enspace $\alpha := \lo + \frac{1}{2}(\up - \lo)$\;
  $\left(\Psi(\alpha),\ \Psi'(\alpha),\ \Psi''(\alpha),\ \tilde{\Psi}(\alpha),\ \tilde{\Psi}'(\alpha),\ \finished,\ \ell_1,\ \ell_2^2,\ d\right) := \aux(x,\ \lambda_1,\ \lambda_2,\ \alpha)$\;
  \BlankLine
  \tcp{Perform root-finding until correct interval has been found.}
  \While{{\bf not} $\finished$}
  {%
    \tcp{Update Bisection interval.}
    \leIf{$\Psi(a) > 0$}{$\lo := \alpha$}{$\up := \alpha$}
    \tcp{One iteration of root-finding.}
    \lIf{$\solver = \bisection$}{$\alpha := \lo + \frac{1}{2}(\up - \lo)$}
    \Else(\CommentSty{// Use solvers based on derivatives.})
    {%
      \lIf{$\solver = \newton$}{$\alpha := \alpha - \frac{\Psi(\alpha)}{\Psi'(\alpha)}$}
      \lElseIf{$\solver = \newtonsqr$}{$\alpha := \alpha - \frac{\tilde{\Psi}(\alpha)}{\tilde{\Psi}'(\alpha)}$}
      \ElseIf{$\solver = \halley$}
      {%
        $h := 1 - \frac{\Psi(\alpha)\Psi''(\alpha)}{2\Psi'(\alpha)^2}$;\enspace $h := \max(0.5,\ \min(1.5,\ h))$;\enspace $\alpha := \alpha - \frac{\Psi(\alpha)}{h\Psi'(\alpha)}$\;
      }
      \tcp{If $\alpha$ fell out of bounds, perform normal Bisection.}
      \lIf{$\alpha < \lo$ {\bf or} $\alpha > \up$}{$\alpha := \lo + \frac{1}{2}(\up - \lo)$}
    }
    \tcp{Re-evaluate auxiliary function at new position.}
    $\left(\Psi(\alpha),\ \Psi'(\alpha),\ \Psi''(\alpha),\ \tilde{\Psi}(\alpha),\ \tilde{\Psi}'(\alpha),\ \finished,\ \ell_1,\ \ell_2^2,\ d\right) := \aux(x,\ \lambda_1,\ \lambda_2,\ \alpha)$\;
  }
  \BlankLine

  \tcp{Correct interval has been found, compute exact value for $\alpha$.}
  $\alpha := \dfrac{1}{d}\left(\ell_1 - \lambda_1\sqrt{\dfrac{d\ell_2^2 - \ell_1^2}{d\lambda_2^2 - \lambda_1^2}}\right)$\nllabel{algl:projfunc-exact-alpha};
  \BlankLine

  \tcp{Compute result of the projection in-place.}
  $\rho := 0$\;
  \For{$i := 1$ \KwTo $n$}
  {%
    $t := x_i - \alpha$\;
    \leIf{$t > 0$}{$x_i := t$;\enspace $\rho := \rho + t^2$}{$x_i := 0$}
  }
  \lFor{$i := 1$ \KwTo $n$} { $x_i := \frac{\lambda_2}{\sqrt{\rho}} x_i$}

  \KwRet $x$\;
\end{algorithm}

\section{Gradient of the Projection}
We conclude our analysis of the sparseness-enforcing projection by considering its gradient:
\begin{lemma}
\label{lem:projfuncgrad}
The projection onto $D$ can be cast as function $\R_{\geq 0}^n\to D$ in all points with unique projections, that is almost everywhere.
Further, this function is differentiable almost everywhere.

More precisely, let $x\in\R_{\geq 0}^n\setminus D$ such that $p := \proj_D(x)$ is unique.
With Remark~\ref{rem:proj1d}, let $\alpha\in\R$ such that $p = \frac{\lambda_2\cdot\max\left(x - \alpha\cdot e,\ 0\right)}{\norm{\max\left(x - \alpha\cdot e,\ 0\right)}_2}$.
If $x_i\neq\alpha$ for all $i\inint{1}{n}$, then $\proj_D$ is differentiable in $x$.
It is further possible to give a closed-form expression for the gradient as follows.
Let the index set of nonzero entries in the projection be denoted by $I := \set{i\inint{1}{n} | p_i > 0} = \set{i_1,\dots,i_d}$ where $d := \abs{I}$.
Let $e_k\in\R^n$ denote the $k$th canonical basis vector for $k\inint{1}{n}$ and let $V := \left(e_{i_1},\ \dots,\ e_{i_d}\right)\transp\in\set{0,1}^{d\times n}$ be the slicing matrix with respect to $I$, that is with $\tilde{x} := Vx\in\R^d$ we have for example $\tilde{x}\transp = \left(x_{i_1},\ \dots,\ x_{i_d}\right)$.
Write $a := d\norm{\tilde{x}}_2^2 - \norm{\tilde{x}}_1^2\in\R_{\geq 0}$ and $b := d\lambda_2^2 - \lambda_1^2\in\R_{\geq 0}$ for short.
With Lemma~\ref{lem:representation} we find that $\alpha = \frac{1}{d}\left(\norm{\tilde{x}}_1 - \lambda_1\sqrt{\frac{a}{b}}\right)$.
Denote by $\tilde{e} := Ve\in\set{1}^d$ the vector where all $d$ entries are equal to unity, and let $\tilde{q} := \max\left(\tilde{x} - \alpha\cdot \tilde{e},\ 0\right) = \tilde{x} - \alpha\cdot \tilde{e}\in\R_{\geq 0}^d$ which implies that $p = \frac{\lambda_2}{\norm{\tilde{q}}_2} V\transp\tilde{q}$ holds.

Let $\tilde{p} := \frac{\lambda_2}{\norm{\tilde{q}}_2}\tilde{q}$ such that $p = V\transp\tilde{p}$, then $\frac{\partial}{\partial x}\proj_D(x) = V\transp G V\in\R^{n\times n}$ where
\begin{displaymath}
  G := \sqrt{\tfrac{b}{a}}E_d - \tfrac{1}{\sqrt{ab}}\left(\lambda_2^2\tilde{e}\tilde{e}\transp + d\tilde{p}\tilde{p}\transp - \lambda_1\left(\tilde{e}\tilde{p}\transp + \tilde{p}\tilde{e}\transp\right)\right)\text{.}
\end{displaymath}
\end{lemma}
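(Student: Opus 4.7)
My plan is to reduce the projection locally to an explicit smooth map and then apply standard multivariable calculus. Under the hypothesis $x_i\neq\alpha$ for all $i$, the strict inequalities $x_i>\alpha$ for $i\in I$ and $x_i<\alpha$ for $i\notin I$ persist on a small neighborhood $U$ of $x$, since both $x$ and $\alpha$ depend continuously on the point being projected. Hence the active set $I$ is locally constant, and on $U$ one has $\proj_D = V\transp\tilde{p}(V\,\cdot\,)$ with $\tilde{p}(\tilde{y}):=\sqrt{b/a(\tilde{y})}\,(\tilde{y}-\alpha(\tilde{y})\tilde{e})$ and $a(\tilde{y}):=d\|\tilde{y}\|_2^2-\|\tilde{y}\|_1^2$; here I have already used the identity $\|\tilde{q}\|_2=\lambda_2\sqrt{a/b}$ read off from the proof of Lemma~\ref{lem:representation}. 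All ingredients are smooth at $\tilde{x}$---in particular $a(\tilde{x})>0$, since the equality case $a(\tilde{x})=0$ of norm equivalence would force all entries of $\tilde{x}$ to be equal and hence equal to $\alpha$, contradicting $x_i\neq\alpha$---so the chain rule gives $\frac{\partial}{\partial x}\proj_D(x) = V\transp G V$ with $G := \frac{\partial\tilde{p}}{\partial\tilde{y}}(\tilde{x})$.

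Next I would compute $\nabla_{\tilde{y}}\alpha$ at $\tilde{x}$. From $\nabla\|\tilde{y}\|_1 = \tilde{e}$ (valid because the entries of $\tilde{x}$ are strictly positive on $U$) and $\nabla a = 2d\tilde{y}-2\|\tilde{y}\|_1\tilde{e}$, the substitutions $\tilde{x}=\tilde{q}+\alpha\tilde{e}$, $\tilde{q}=\sqrt{a/b}\,\tilde{p}$, and $\|\tilde{q}\|_1=\lambda_1\sqrt{a/b}$ rewrite the value at $\tilde{x}$ as $\nabla a = 2\sqrt{a/b}\,(d\tilde{p}-\lambda_1\tilde{e})$. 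Plugging this into the closed form $\alpha=\frac{1}{d}(\|\tilde{y}\|_1-\lambda_1\sqrt{a/b})$ and using the algebraic identity $b+\lambda_1^2 = d\lambda_2^2$ collapses everything to the compact expression $\nabla\alpha = \frac{1}{b}(\lambda_2^2\tilde{e}-\lambda_1\tilde{p})$.

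Finally, $G$ arises from the product rule applied to $\tilde{p}=\sqrt{b/a}\,\tilde{q}$. The first contribution $\sqrt{b/a}\,(E_d-\tilde{e}\nabla\alpha\transp)$ expands to $\sqrt{b/a}\,E_d-\tfrac{1}{\sqrt{ab}}(\lambda_2^2\tilde{e}\tilde{e}\transp-\lambda_1\tilde{e}\tilde{p}\transp)$, using $\sqrt{b/a}/b=1/\sqrt{ab}$. For the second contribution, $\nabla(\sqrt{b/a}) = -\tfrac{1}{a}(d\tilde{p}-\lambda_1\tilde{e})$, which combined with $\tilde{q}=\sqrt{a/b}\,\tilde{p}$ yields $\tilde{q}\,\nabla(\sqrt{b/a})\transp = -\tfrac{1}{\sqrt{ab}}(d\tilde{p}\tilde{p}\transp-\lambda_1\tilde{p}\tilde{e}\transp)$. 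Summing the two contributions produces exactly the claimed symmetric form of $G$. The main obstacle is purely algebraic bookkeeping: every intermediate quantity must be expressed consistently in $\tilde{p}$, $\tilde{e}$, $a$, and $b$ via the identities $\tilde{q}=\sqrt{a/b}\,\tilde{p}$, $\|\tilde{q}\|_2=\lambda_2\sqrt{a/b}$, $\|\tilde{q}\|_1=\lambda_1\sqrt{a/b}$, and $b+\lambda_1^2=d\lambda_2^2$, so that the four apparently different rank-one updates recombine into the stated symmetric matrix.
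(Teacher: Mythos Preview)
Your argument is correct. The overall strategy coincides with the paper's: fix the active set $I$ locally, pass to the sliced variable $\tilde{x}=Vx$, compute $\nabla_{\tilde{x}}\alpha$, and assemble $G$. The one genuine organizational difference is where you invoke the identity $\lambda_2/\norm{\tilde{q}}_2=\sqrt{b/a}$. You substitute it \emph{before} differentiating and then apply the product rule to $\tilde{p}=\sqrt{b/a}\,\tilde{q}$, obtaining the two rank-one blocks $-\tfrac{1}{\sqrt{ab}}(\lambda_2^2\tilde{e}\tilde{e}\transp-\lambda_1\tilde{e}\tilde{p}\transp)$ and $-\tfrac{1}{\sqrt{ab}}(d\tilde{p}\tilde{p}\transp-\lambda_1\tilde{p}\tilde{e}\transp)$ directly. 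The paper instead keeps the normalization map $\tilde{q}\mapsto\lambda_2\tilde{q}/\norm{\tilde{q}}_2$, uses its standard Jacobian $\tfrac{\lambda_2}{\norm{\tilde{q}}_2}\bigl(E_d-\tilde{q}\tilde{q}\transp/\norm{\tilde{q}}_2^2\bigr)$, and multiplies this against $E_d-\tilde{e}\,\nabla\alpha\transp$; this produces six terms and requires the dyadic simplifications $\tilde{q}\tilde{q}\transp\tilde{e}=\norm{\tilde{q}}_1\tilde{q}$ to collapse to the final four. Your route is a bit shorter and yields the symmetric form of $G$ without that extra matrix--matrix expansion; the paper's route has the advantage of reusing the familiar gradient of $L_2$ normalization. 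Both hinge on the same identities $\norm{\tilde{q}}_1=\lambda_1\sqrt{a/b}$, $\norm{\tilde{q}}_2=\lambda_2\sqrt{a/b}$, and $b+\lambda_1^2=d\lambda_2^2$, and on the same expression for $\nabla\alpha$ (the paper writes it as $\tfrac{\lambda_2^2}{b}\tilde{e}\transp-\tfrac{\lambda_1}{\sqrt{ab}}\tilde{q}\transp$, which is your $\tfrac{1}{b}(\lambda_2^2\tilde{e}-\lambda_1\tilde{p})$ after substituting $\tilde{q}=\sqrt{a/b}\,\tilde{p}$).

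One small remark: your justification ``the entries of $\tilde{x}$ are strictly positive on $U$'' is only needed to write $\nabla\norm{\tilde{y}}_1=\tilde{e}$, but in fact you are differentiating $\tilde{e}\transp\tilde{y}$ (which equals $\norm{\tilde{y}}_1$ on the non-negative orthant), so positivity is not actually required there. Your argument that $a(\tilde{x})>0$ is fine: $a=0$ forces all $\tilde{x}_i$ equal, and then the closed form for $\alpha$ gives $\alpha=\tfrac{1}{d}\norm{\tilde{x}}_1$, i.e.\ $\tilde{x}_i=\alpha$ for all $i\in I$, contradicting the hypothesis.
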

\begin{proof}
The projection is unique almost everywhere as already shown by~\cite{Theis2005}.
When $x_i\neq\alpha$ for all $i\inint{1}{n}$, $\proj_D$ is differentiable as composition of differentiable functions as then $I$ is invariant to local changes in $x$.
Write $\tilde{p} := \frac{\lambda_2}{\norm{\tilde{q}}_2}\tilde{q}$ such that $p = V\transp\tilde{p}$, then the chain rule yields
\begin{displaymath}
  \frac{\partial p}{\partial x}
  = \frac{\partial V\transp\tilde{p}}{\partial\tilde{p}}\cdot
    \frac{\partial}{\partial\tilde{q}}\left(\frac{\lambda_2}{\norm{\tilde{q}}_2}\tilde{q}\right)\cdot
    \frac{\partial\left(\tilde{x} - \alpha\cdot \tilde{e}\right)}{\partial\tilde{x}}\cdot
    \frac{\partial Vx}{\partial x}
  = V\transp\cdot
    \underbrace{\lambda_2\frac{\partial}{\partial\tilde{q}}\left(\frac{\tilde{q}}{\norm{\tilde{q}}_2}\right)\cdot
    \left(E_d - \tilde{e}\frac{\partial\alpha}{\partial\tilde{x}}\right)}_{=: G\in\R^{d\times d}}\cdot
    V\text{.}
\end{displaymath}
We thus only have to show that the matrix $G$ defined here matches the matrix from the claim.
It is easy to see that the mapping from a vector to its normalized version has a simple gradient, that is we have that
$\frac{\partial}{\partial\tilde{q}}\frac{\tilde{q}}{\norm{\tilde{q}}_2} = \frac{1}{\norm{\tilde{q}}_2}\left(E_d - \frac{\tilde{q}\tilde{q}\transp}{\norm{\tilde{q}}_2^2}\right)$.
Because $\tilde{q}$ and $\tilde{x}$ have only non-negative entries, the canonical dot product with $\tilde{e}$ yields essentially their $L_1$ norms.
We hence obtain $\norm{\tilde{q}}_1 = \tilde{e}\transp\tilde{x} - \alpha\tilde{e}\transp\tilde{e} = \lambda_1\sqrt{\frac{a}{b}}$.
Likewise, the $L_2$ norm of $\tilde{q}$ equals
\begin{align*}
  \norm{\tilde{q}}_2^2
  &= \norm{\tilde{x}}_2^2 - 2\alpha\norm{\tilde{x}}_1 + d\alpha^2
  = \norm{\tilde{x}}_2^2 - \alpha\left(\norm{\tilde{x}}_1 + \lambda_1\sqrt{\tfrac{a}{b}}\right)
  = \norm{\tilde{x}}_2^2 - \tfrac{1}{d}\left(\norm{\tilde{x}}_1^2 - \lambda_1^2\tfrac{a}{b}\right)\\
  &= \tfrac{1}{d}\big(d\norm{\tilde{x}}_2^2 - \norm{\tilde{x}}_1^2\big)+ \lambda_1^2\tfrac{a}{b}
  = \tfrac{a}{d}\left(1 + \tfrac{\lambda_1^2}{b}\right)
  = \lambda_2^2\tfrac{a}{b}\text{.}
\end{align*}
To compute the gradient of $\alpha$, we first note that $b$ does not depend on $\tilde{x}$ but $a$ does.
It is $\frac{\partial}{\partial\tilde{x}} a = 2d\tilde{x}\transp - 2\norm{\tilde{x}}_1\tilde{e}\transp\in\R^{1\times d}$, and hence $\frac{\partial}{\partial\tilde{x}} \sqrt{a} = \frac{1}{\sqrt{a}}\left(d\tilde{x}\transp - \norm{\tilde{x}}_1\tilde{e}\transp\right)\in\R^{1\times d}$.
With $\tilde{x} = \tilde{q} + \alpha\cdot\tilde{e}$ follows $d\tilde{x} - \norm{\tilde{x}}_1\tilde{e} = d\tilde{q} - \lambda_1\sqrt{\frac{a}{b}}\tilde{e}$, and hence
\begin{displaymath}
  \frac{\partial}{\partial\tilde{x}}\alpha
  = \tfrac{1}{d}\tilde{e}\transp - \tfrac{\lambda_1}{d\sqrt{ab}}\left(d\tilde{x}\transp - \norm{\tilde{x}}_1\tilde{e}\transp\right)
  = \left(\tfrac{1}{d} + \tfrac{\lambda_1^2}{db}\right)\tilde{e}\transp - \tfrac{\lambda_1}{\sqrt{ab}}\tilde{q}\transp
  = \tfrac{\lambda_2^2}{b}\tilde{e}\transp - \tfrac{\lambda_1}{\sqrt{ab}}\tilde{q}\transp\in\R^{1\times d}\text{.}
\end{displaymath}
Therefore, by substitution into $G$ and multiplying out we yield
\begin{align*}
  G &= \sqrt{\tfrac{b}{a}}\left(E_d - \tfrac{b}{\lambda_2^2 a}\tilde{q}\tilde{q}\transp\right)
      \left(E_d - \tfrac{\lambda_2^2}{b}\tilde{e}\tilde{e}\transp + \tfrac{\lambda_1}{\sqrt{ab}}\tilde{e}\tilde{q}\transp\right)\\
  &= \sqrt{\tfrac{b}{a}}\left(E_d - \tfrac{\lambda_2^2}{b}\tilde{e}\tilde{e}\transp + \tfrac{\lambda_1}{\sqrt{ab}}\tilde{e}\tilde{q}\transp
        - \tfrac{b}{\lambda_2^2 a}\tilde{q}\tilde{q}\transp + \tfrac{\lambda_1}{\sqrt{ab}}\tilde{q}\tilde{e}\transp - \tfrac{\lambda_1^2}{\lambda_2^2 a}\tilde{q}\tilde{q}\transp \right)\text{,}
\end{align*}
where we have used
$\tilde{q}\tilde{q}\transp\tilde{e}\tilde{e}\transp = \tilde{q}\left(\tilde{q}\transp\tilde{e}\right)\tilde{e}\transp = \lambda_1\sqrt{\frac{a}{b}}\tilde{q}\tilde{e}\transp$
and $\tilde{q}\tilde{q}\transp\tilde{e}\tilde{q}\transp = \lambda_1\sqrt{\frac{a}{b}}\tilde{q}\tilde{q}\transp$.
The claim then follows with $\frac{b}{\lambda_2^2 a} + \frac{\lambda_1^2}{\lambda_2^2 a} = \frac{d}{a}$ and $\tilde{q} = \sqrt{\frac{a}{b}}\tilde{p}$.
\end{proof}
The gradient given in Lemma~\ref{lem:projfuncgrad} has a particular simple form, as it is essentially a scaled identity matrix with additive combination of scaled dyadic products of simple vectors.
In the situation where not the entire gradient but merely its product with an arbitrary vector is required, simple vector operations are already enough to compute the product:
\begin{theorem}
Algorithm~\ref{alg:projgrad} computes the product of the gradient of the sparseness projection with an arbitrary vector in time and space linear in the problem dimensionality $n$.
\end{theorem}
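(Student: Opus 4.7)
The plan is to exploit the closed-form expression for the gradient derived in Lemma~\ref{lem:projfuncgrad} and to show that the resulting matrix-vector product reduces to a constant number of vector operations, each of which runs in $O(n)$ time and uses $O(n)$ space. Writing the gradient as $V\transp G V$, for any $y\in\R^n$ we have
\begin{displaymath}
  \frac{\partial\proj_D(x)}{\partial x}\cdot y = V\transp\bigl(G\,(Vy)\bigr)\text{,}
\end{displaymath}
so the computation splits naturally into three stages: a gather $\tilde{y} := Vy\in\R^d$, a length-$d$ operation $\tilde{z} := G\tilde{y}$, and a scatter $V\transp\tilde{z}\in\R^n$.

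First I would argue that gather and scatter are both $O(n)$. The index set $I = \set{i\inint{1}{n} | p_i > 0}$ can be recognized in a single pass over the already-computed projection $p$, so that $\tilde{y}$ is assembled by copying the $d\leq n$ corresponding entries of $y$ into a length-$d$ buffer, and $V\transp\tilde{z}$ is formed by writing $\tilde{z}$ back at the positions in $I$ of the output while zeroing the remainder. Both are linear-time, linear-space operations that never require $V$ to be stored as a matrix.

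The heart of the argument is to expand $G\tilde{y}$ without ever materializing the $d\times d$ matrix $G$. The key observation is that $G$ is a scaled identity plus a symmetric rank-two correction living in the span of $\tilde{e}$ and $\tilde{p}$. With the abbreviations $s_e := \tilde{e}\transp\tilde{y} = \sum_{i\in I}\tilde{y}_i$ and $s_p := \tilde{p}\transp\tilde{y}$, the closed form from Lemma~\ref{lem:projfuncgrad} yields
\begin{displaymath}
  G\tilde{y} = \sqrt{\tfrac{b}{a}}\,\tilde{y} - \tfrac{1}{\sqrt{ab}}\bigl(\lambda_2^2\,s_e\,\tilde{e} + d\,s_p\,\tilde{p} - \lambda_1\,(s_p\,\tilde{e} + s_e\,\tilde{p})\bigr)\text{.}
\end{displaymath}
Each of $s_e$ and $s_p$ is computed in $O(d)$ by a single sweep over $\tilde{y}$ and $\tilde{p}$. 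The scalars $d$, $\norm{\tilde{x}}_1$ and $\norm{\tilde{x}}_2^2$ needed to form $a$ and $b$, together with $\tilde{p}$ itself, are either already stored as a by-product of Algorithm~\ref{alg:projfunc} or recoverable in one additional pass. Once $s_e$, $s_p$, $a$, $b$ and $d$ are known, assembling $G\tilde{y}$ is a constant number of scalar-times-vector and vector-plus-vector combinations of the length-$d$ vectors $\tilde{y}$, $\tilde{e}$ and $\tilde{p}$, each of which is linear-time and linear-space.

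Composing these costs yields a constant number of $O(n)$ passes and at most a handful of length-$n$ working vectors of additional storage, which proves the theorem. The only real obstacle is book-keeping: verifying that all auxiliary quantities ($I$, $\tilde{p}$, $a$, $b$, $d$ and the two inner products $s_e$, $s_p$) are obtainable either directly from the state left behind by Algorithm~\ref{alg:projfunc} or with one further $O(n)$ sweep, and that the four rank-one contributions to $G\tilde{y}$ are realized as a dot product followed by a scaling rather than as an explicit $d\times d$ outer product. Once this is checked, the stated time and space bounds can be read off directly from the algorithm.
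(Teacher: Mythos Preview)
Your proposal is correct and follows exactly the approach the paper has in mind: the paper's own justification is simply that the claim ``can directly be validated using the expression from the gradient given in Lemma~\ref{lem:projfuncgrad},'' and your write-up spells out precisely this validation by exploiting the $V\transp G V$ factorization and the scaled-identity-plus-rank-two structure of $G$. Your expansion of $G\tilde{y}$ in terms of $s_e$ and $s_p$ matches line-for-line what Algorithm~\ref{alg:projgrad} computes, so nothing further is needed.
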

This claim can directly be validated using the expression from the gradient given in Lemma~\ref{lem:projfuncgrad}.

\begin{algorithm}[t]
  \caption{Product of the gradient of the projection onto $D$ with an arbitrary vector.}
  \label{alg:projgrad}
  \SetAlgoLined
  \SetKw{KwGoTo}{go to}
  \KwIn{$y\in\R^n$ and the following results of Algorithm~\ref{alg:projfunc}: $I = \discint{i_1}{i_d}\subseteq\discint{1}{n}$, $d := \abs{I}$, $\tilde{p}\in\R_{\geq 0}^d$, $\lambda_1,\lambda_2\in\R_{>0}$, $a := d\norm{\tilde{x}}_2^2 - \norm{\tilde{x}}_1^2\in\R_{\geq 0}$ and $b := d\lambda_2^2 - \lambda_1^2\in\R_{\geq 0}$.}
  \KwOut{$z := \left(\frac{\partial}{\partial x}\proj_D(x)\right)\cdot y\in \R^n$.}
  \BlankLine

  \tcp{Scan and slice input vector.}
  $\tilde{y}\in\set{0}^d$;\enspace $\sy := 0$;\enspace $\scpy := 0$\;
  \lFor{$i := 1$ \KwTo $d$}{$\sy := \sy + z_{i_j}$;\enspace $\scpy := \scpy + \tilde{p}_j\cdot z_{i_j}$;\enspace $\tilde{z}_j := z_{i_j}$}
  \BlankLine

  \tcp{Compute product with gradient in sliced space.}
  $\tilde{z} := \sqrt{\frac{b}{a}}\tilde{z}$;\enspace%
  $\tilde{z} := \tilde{z} + \frac{1}{\sqrt{ab}}\left(\lambda_1\cdot\sy - d\cdot\scpy\right)\tilde{p}$;\enspace%
  $\tilde{z} := \tilde{z} + \frac{1}{\sqrt{ab}}\left(\lambda_1\cdot\scpy - \lambda_2^2\cdot\sy\right)\tilde{e}$\;
  \BlankLine

  \tcp{Unslice to yield final result.}
  $y\in\set{0}^n$;\enspace \lFor{$i := 1$ \KwTo $d$}{$y_{i_j} := \tilde{z}_j$}
  \KwRet $y$\;
\end{algorithm}

\section{Experiments}
\label{sect:experiments}
To assess the performance of the algorithm we proposed to compute sparseness-enforcing projections, several experiments have been carried out.
As the projection onto $D$ is unique almost everywhere, different approaches must compute the same result except for a null set.
We have compared the results of the algorithm proposed by~\cite{Hoyer2004} with the results of our algorithm for problem dimensionalities $n\inint{2^2}{2^{26}}$ and for target degrees of sparseness $\sigma^*\in\set{0.025,\ 0.050,\ \dots,\ 0.950,\ 0.975}$.
For every combination of $n$ and $\sigma^*$ we have sampled one thousand random vectors, carried out both algorithms, and found that both algorithms produce numerically equal results given the very same input vector.
Moreover, we have numerically verified the gradient of the projection for the same range using the central difference quotient.

Finally, experiments have been conducted to evaluate the choice of the solver for Algorithm~\ref{alg:projfunc}.
We have set the problem dimensionality to $n := 1024$, and then sampled one thousand random vectors for target sparseness degrees of $\sigma^*\in\set{0.200,\ 0.225,\ \dots,\ 0.950,\ 0.975}$.
We have used the very same random vectors as input for all solvers, and counted the number of times the auxiliary function had to be computed until the solution was found.
The results of this experiment are depicted in Figure~\ref{fig:solverplotn}.
While Bisection needs about the same number of evaluations over all sparseness degrees, the solvers based on the derivative of $\Psi$ depend on $\sigma^*$ in their number of evaluations.
This is because their starting value is set to the midpoint of the initial bracket in Algorithm~\ref{alg:projfunc}, and thus their distance to the root of $\Psi$ naturally depends on $\sigma^*$.
The solver that performs best is $\newtonsqr$, that is Newton's method applied to $\tilde{\Psi}$.
It is quite surprising that the methods based on derivatives perform so well, as $\Psi'$ possesses several step discontinuities as illustrated in Figure~\ref{fig:projfunclin-toy}.

In the next experiment, the target sparseness degree was set to $\sigma^* := 0.90$ and the problem dimensionality $n$ was varied in $\discint{2^2}{2^{26}}$.
The results are shown in Figure~\ref{fig:solverplotsigma}.
The number of evaluations Bisection needs in the experiment grows about linearly in $\log(n)$.
Because the expected minimum difference of two distinct entries from a random vector gets smaller when the dimensionality of the random vector is increased, the expected number of function evaluations Bisection requires increases with problem dimensionality.
In either case, the length of the interval that has to be found is always bounded from below by the machine precision such that the number of function evaluations with Bisection is bounded from above.
The methods based on derivatives exhibit sublinear growth, where the solver $\newtonsqr$ is again the best performing one.
Note that the number of iterations it requires decreases when dimensionality is enhanced.
This is because Hoyer's sparseness measure $\sigma$ is not invariant to problem dimensionality, and hence a sparseness of $\sigma^* = 0.90$ has a different notion for $n = 2^{26}$ than for $n = 2^8$.

\begin{figure}[p]
  \includegraphics[width=\textwidth]{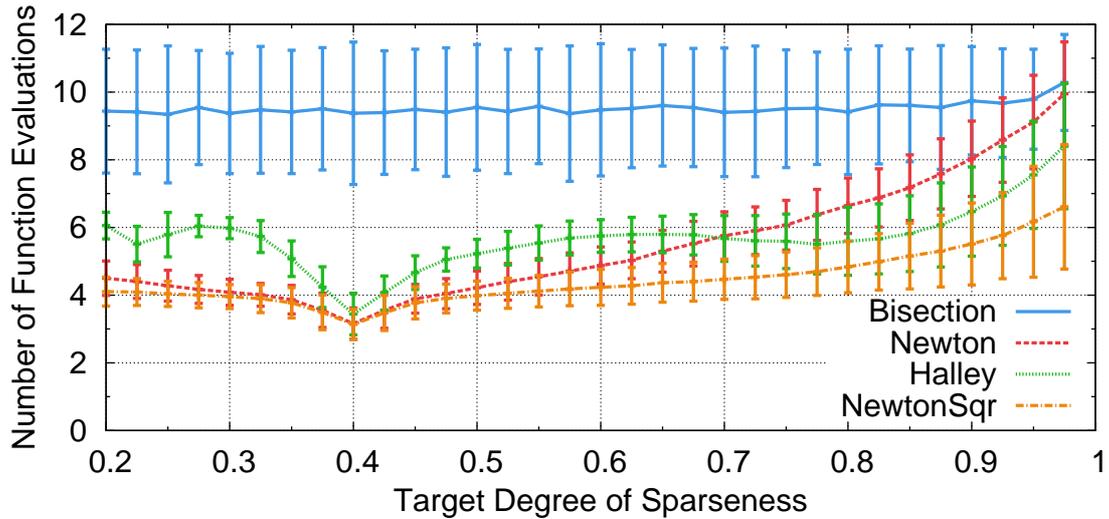}
  \caption{Auxiliary function evaluations needed to find the final interval with four different solvers. The problem dimensionality was set to $n := 1024$ and the target degree of sparseness $\sigma^*$ was varied. While the performance of Bisection is constant over different values of $\sigma^*$, the solvers that use the derivative of the auxiliary function depend on the target sparseness and consistently outperform Bisection. Newton's method applied to $\tilde{\Psi}$ is the best-performing solver over all choices of $\sigma^*$.}
  \label{fig:solverplotn}
\end{figure}

\begin{figure}[p]
  \includegraphics[width=\textwidth]{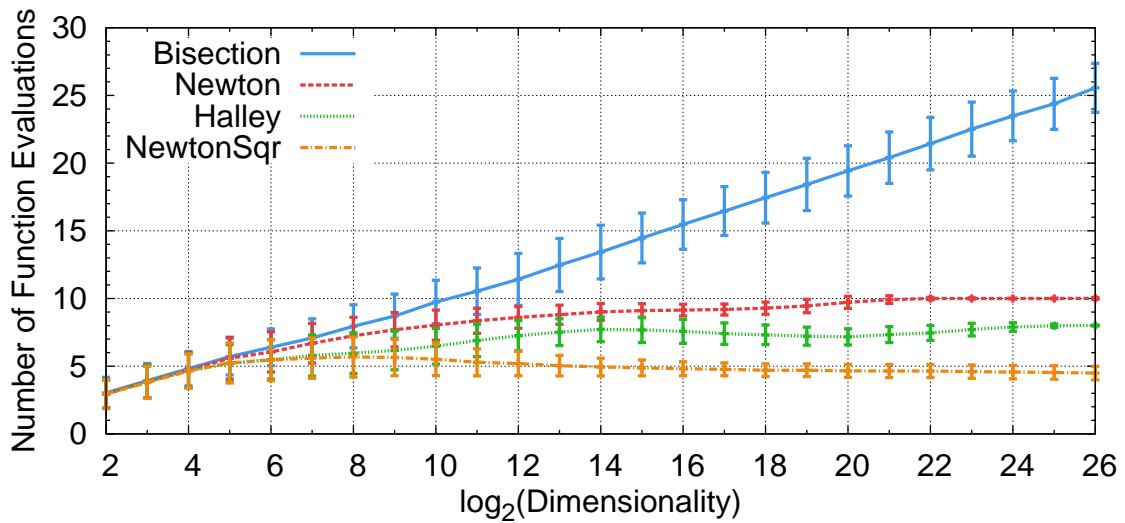}
  \caption{Same plot as in Figure~\ref{fig:solverplotn}, except for the target degree of sparseness was set to $\sigma^* := 0.90$ and the problem dimensionality was varied. The number of required function evaluations grows linearly with the logarithm of the problem dimensionality for Bisection, while the other solvers require a number sublinear in $\log(n)$. When $n = 2^{26}\approx 67\cdot 10^{6}$ then Newton's method only requires $10$~iterations in the mean, and Newton's method applied to $\tilde{\Psi}$ requires only $4$~iterations.}
  \label{fig:solverplotsigma}
\end{figure}

\section{Conclusion}
In this paper, we have proposed an efficient algorithm for computing sparseness-enforcing projections with respect to Hoyer's sparseness measure $\sigma$.
Although the target set of the projection is here non-convex, methods from projections onto simplexes could be adapted in a straightforward way.
We have rigorously proved the correctness of our proposed algorithm, and additionally we have yielded a simple procedure to compute its gradient.
We have shown that our algorithm needs only little resources, and that it scales well with problem dimensionality, even for very high target sparseness degrees.

\subsection*{Acknowledgments}
The authors would like to thank Michael Gabb for helpful discussions.
This work was supported by Daimler~AG, Germany.

\bibliographystyle{IEEEtran}
\bibliography{the}

\begin{thebibliography}{10}
\providecommand{\url}[1]{#1}
\csname url@samestyle\endcsname
\providecommand{\newblock}{\relax}
\providecommand{\bibinfo}[2]{#2}
\providecommand{\BIBentrySTDinterwordspacing}{\spaceskip=0pt\relax}
\providecommand{\BIBentryALTinterwordstretchfactor}{4}
\providecommand{\BIBentryALTinterwordspacing}{\spaceskip=\fontdimen2\font plus
\BIBentryALTinterwordstretchfactor\fontdimen3\font minus
  \fontdimen4\font\relax}
\providecommand{\BIBforeignlanguage}[2]{{%
\expandafter\ifx\csname l@#1\endcsname\relax
\typeout{** WARNING: IEEEtran.bst: No hyphenation pattern has been}%
\typeout{** loaded for the language `#1'. Using the pattern for}%
\typeout{** the default language instead.}%
\else
\language=\csname l@#1\endcsname
\fi
#2}}
\providecommand{\BIBdecl}{\relax}
\BIBdecl

\bibitem{Hurley2009}
N.~Hurley and S.~Rickard, ``{C}omparing measures of sparsity,'' \emph{IEEE
  Transactions on Information Theory}, vol.~55, no.~10, pp. 4723--4741, 2009.

\bibitem{Hoyer2004}
P.~O. Hoyer, ``{N}on-negative matrix factorization with sparseness
  constraints,'' \emph{Journal of Machine Learning Research}, vol.~5, pp.
  1457--1469, 2004.

\bibitem{Bertsekas1999}
D.~P. Bertsekas, \emph{{N}onlinear {P}rogramming}, 2nd~ed.\hskip 1em plus 0.5em
  minus 0.4em\relax Athena Scientific, 1999.

\bibitem{Deutsch2001}
F.~Deutsch, \emph{{B}est {A}pproximation in {I}nner {P}roduct {S}paces}.\hskip
  1em plus 0.5em minus 0.4em\relax Springer, 2001.

\bibitem{Theis2005}
F.~J. Theis, K.~Stadlthanner, and T.~Tanaka, ``{F}irst results on uniqueness of
  sparse non-negative matrix factorization,'' in \emph{Proceedings of the
  European Signal Processing Conference}, 2005, pp. 1672--1675.

\bibitem{Potluru2013}
V.~K. Potluru, S.~M. Plis, J.~L. Roux, B.~A. Pearlmutter, V.~D. Calhoun, and
  T.~P. Hayes, ``{B}lock coordinate descent for sparse {NMF},'' Tech. Rep.
  arXiv:1301.3527v1, 2013.

\bibitem{Liu2009a}
J.~Liu and J.~Ye, ``{E}fficient euclidean projections in linear time,'' in
  \emph{Proceedings of the International Conference on Machine Learning}, 2009,
  pp. 657--664.

\bibitem{Laub2004}
A.~J. Laub, \emph{{M}atrix {A}nalysis for {S}cientists and {E}ngineers}.\hskip
  1em plus 0.5em minus 0.4em\relax Society for Industrial and Applied
  Mathematics, 2004.

\bibitem{Hyvaerinen1999a}
A.~Hyv\"{a}rinen, P.~Hoyer, and E.~Oja, ``{S}parse code shrinkage: {D}enoising
  by nonlinear maximum likelihood estimation,'' in \emph{Advances in Neural
  Information Processing Systems 11}, 1999, pp. 473--478.

\bibitem{Traub1964}
J.~F. Traub, \emph{{I}terative {M}ethods for the {S}olution of
  {E}quations}.\hskip 1em plus 0.5em minus 0.4em\relax Prentice-Hall, 1964.

\bibitem{Press2007}
W.~H. Press, S.~A. Teukolsky, W.~T. Vetterling, and B.~P. Flannery,
  \emph{{N}umerical {R}ecipes: {T}he {A}rt of {S}cientific {C}omputing},
  3rd~ed.\hskip 1em plus 0.5em minus 0.4em\relax Cambridge University Press,
  2007.

\end{thebibliography}

\end{document}